\not \isundefined{\Section}}        
\not \isundefined{\ead}}        
  \not \isundefined{\volume} \and 
  \not \isundefined{\issue} \and 
  \not \isundefined{\copyrightowner} \and 
  \not \isundefined{\articletitle} \and 
  \not \isundefined{\pubyear}}
  \not \isundefined{\subjclass} \and
  \not \isundefined{\titlecomment} \and
  \not \isundefined{\revisionname} \and
  \not \isundefined{\lmcsheading}}
  \not \isundefined{\IEEEtransversionminor} \and 
  \not \isundefined{\IEEEitemize} \and 
  \not \isundefined{\IEEEenumerate} \and 
  \not \isundefined{\IEEEdescription}}
  \not \isundefined{\email} \and 
  \not \isundefined{\keywords} \and 
  \not \isundefined{\sanhao} \and 
  \not \isundefined{\wuhao}}
  \not \isundefined{\institutename} \and 
  \not \isundefined{\email} \and 
  \not \isundefined{\fnmsep}}
  \not \isundefined{\acmNumber} \and 
  \not \isundefined{\acmArticle} \and 
  \not \isundefined{\acmYear} \and 
  \not \isundefined{\acmMonth}}
\not \isundefined{\disputationsdatum} 
\not \isundefined{\disputationslokal}}   
  \or \boolean{maybeSTOC} \or \boolean{maybeFOCS}
  \or \boolean{maybeSIAM} \or \boolean{maybeIEEE}
  \or \boolean{maybeICS} \or \boolean{maybePoster}}
\or \boolean{maybeSIAM} \or \boolean{maybeLMCS}
 \or \boolean{maybeNOW} \or \boolean{maybeLNCS} \or \boolean{maybeACM}}
\or \boolean{maybeFOCS} \or \boolean{maybeSTOC}
  \or \boolean{maybePoster} \or \boolean{maybeSIAM} \or \boolean{maybeLMCS}
  \or \boolean{maybeIEEE}   \or \boolean{maybeNOW}  \or \boolean{maybeICS}
  \or \boolean{maybeThesis} \or \boolean{maybeLNCS} \or \boolean{maybeACM}}
\DeclareMathAlphabet{\mathsfsl}{OT1}{cmss}{m}{sl}
\newcommand{\eqperiod}{\enspace .}
\newcommand{\formatfunctiontoset}[1]{\mathit{#1}}
\newcommand{\introduceterm}[1]{{\emph{#1}}}
\newcommand{\ie}{i.e.,\ }
\newcommand{\bigomega}[1]{\Omega ( #1 )}
\newcommand{\problemlanguageformat}[1]{\textsc{#1}\xspace}
\newcommand{\MINIMALUNSATISFIABILITY}%
  {\problemlanguageformat{minimal unsatisfiability}}
\newcommand{\complclassformat}[1]{\textrm{\upshape{\textsf{#1}}}\xspace}
\newcommand{\cocomplclass}[1]%
        {\mbox{\complclassformat{co}-\complclassformat{#1}}\xspace}
\newcommand{\refsec}[1]{Section~\ref{#1}}
\newcommand{\reffig}[1]{Figure~\ref{#1}}
\newcommand{\reftwofigs}[2]{Figures~\ref{#1} and~\ref{#2}}
\newcommand{\refth}[1]{Theorem~\ref{#1}}
\newcommand{\reflem}[1]{Lemma~\ref{#1}}
\newcommand{\refpr}[1]{Proposition~\ref{#1}}
\newcommand{\refcor}[1]{Corollary~\ref{#1}}
\newcommand{\refdef}[1]{Definition~\ref{#1}}
\newcommand{\refobs}[1]{Observation~\ref{#1}}
\newcommand{\refeq}[1]{\eqref{#1}}}
\renewcommand{\refeq}[1]{\eqref{#1}}}
\newcommand{\MAXOFSET}[3][:]{\max \left\{ #2 #1 #3 \right\}}
\newcommand{\MINOFSET}[3][:]{\min \left\{ #2 #1 #3 \right\}}
\newcommand{\fieldstd}{\mathbb{F}}
\DeclareMathOperator{\Expop}{E}
\newcommand{\twincommandJN}[6]%
    {#1#2#3\vphantom{#2#5}\mspace{-2.25mu}#4.#5#6}
\newcommand{\CondExp}[2]%
    {\Expop\twincommandJN{\bigl[}{#1}{\bigl|}{\bigr}{\,#2}{\bigr]}}
\newcommand{\CONDEXP}[2]%
     {\Expop\twincommandJN{\left[}{#1}{\left|}{\right}{\,#2}{\right]}}
\newcommand{\CondProb}[3][]%
    {\Pr_{#1}\twincommandJN{\bigl[}{#2}{\bigl|}{\bigr}{\,#3}{\bigr]}}
\newcommand{\CONDPROB}[3][]%
    {\Pr_{#1}\twincommandJN{\left[}{#2}{\left|}{\right}{\,#3}{\right]}}
\newcommand{\isdistras}[2]{\ensuremath{#1} \sim \ensuremath{#2}}
\newcommand{\funcdescr}[3]{\ensuremath{ #1\colon #2 \to #3}}
\newcommand{\vertices}[1]{V( #1 )}
\newcommand{\setcompact}[1]{{\ensuremath{\bigl\{ #1 \bigr\}}}}
\newcommand{\setsmall}[1]{{\ensuremath{\{ #1 \}}}}
\newcommand{\setdescrcompact}[3][\mid]{{\setcompact{ #2 #1 #3 }}}
\newcommand{\set}[1]{\{ #1 \}}
\newcommand{\setdescr}[3][\mid]{\setsmall{ #2 #1 #3 }}
\newcommand{\Setdescr}[3][|]%
     {\twincommandJN{\bigl\{}{#2}{\bigl#1}{\bigr}{\,#3}{\bigr\}}}
\newcommand{\SETDESCR}[3][|]%
     {\twincommandJN{\left\{}{#2}{\left#1}{\right}{\,#3}{\right\}}}
\newcommand{\Setdescrbrackets}[3][|]%
     {\twincommandJN{\bigl[}{#2}{\bigl#1}{\bigr}{\,#3}{\bigr]}}
\newcommand{\SETDESCRBRACKETS}[3][|]%
     {\twincommandJN{\left[}{#2}{\left#1}{\right}{\,#3}{\right]}}
\newcommand{\setsize}[1]{\lvert#1\rvert}
\newcommand{\union}{\cup}
\newcommand{\Union}{\bigcup}
\newcommand{\unionSP}{\, \union \, }
\newcommand{\DisjointunionInText}%
    {{\smash{\overset{\mbox{\boldmath{.}}}{\bigcup}}}\vphantom{\bigcup}}
\newcommand{\intnfirst}[1]{[{#1}]}
\newcommand{\Lor}{\bigvee}
\newcommand{\Land}{\bigwedge}
\newcommand{\Lornodisplay}{{\textstyle \bigvee}}
\newcommand{\olnot}[1]{\overline{#1}}
\newcommand{\stdnot}[1]{\olnot{#1}}
\newcommand{\cnfform}{\cnfshort for\-mu\-la\xspace}
\newcommand{\cnfshort}{CNF\xspace}
\newcommand{\xcnfform}[1]{\mbox{\ensuremath{#1}-}\cnfform}
\newcommand{\kcnfform}{\xcnfform{\clwidth}}
\newcommand{\xclause}[1]{\mbox{\ensuremath{#1}-clause}\xspace}
\newcommand{\nvar}{n}
\newcommand{\nclause}{m}
\newcommand{\clwidth}{k}
\newcommand{\randkcnfnclwrepl}[3][\clwidth]%
        {\ensuremath{\mathcal{F}^{#2, #3}_{#1}}}
\newcommand{\randkcnfnclwreplstd}%
        {\randkcnfnclwrepl{\clwidth}{\nvar}{\nclause}}
\newcommand{\israndkcnfnclwrepl}[4]%
  {\isdistras{#1}{\randkcnfnclwrepl[#2]{#3}{#4}}}
\newcommand{\randkcnfprobcl}[3]%
        {\ensuremath{\mathcal{F}^{#2}_{#1} \bigl(#3 \bigr)}}
\newcommand{\pcfor}[4][to]{for #2 := #3 #1 #4 do}
\newcommand{\pcformath}[4][to]%
    {\pcfor[#1]{\ensuremath{#2}}{\ensuremath{#3}}{\ensuremath{#4}}}
\newcommand{\pcassigncompact}[2]{#1 := #2}
\newcommand{\pcassignmathcompact}[2]%
        {\pcassigncompact{\ensuremath{#1}}{\ensuremath{#2}}}
\newcommand{\inductionformat}[1]{\textit{#1}}
\newcommand{\BASE}[1][]
        {\inductionformat
                {%
                        \ifthenelse{\equal{#1}{}}%
                                {Base case: }%
                                {Base case (#1):}%
                }%
        }
\not \boolean{maybeSTOC}     \and \not \boolean{maybeFOCS}
\not \boolean{maybeElsevier} \and \not \boolean{maybePoster}
\not \boolean{maybeSIAM}     \and \not \boolean{maybeACM}
\not \boolean{maybeIEEE}     \and \not \boolean{maybeNOW}
\not \boolean{maybeICS}      \and \not \boolean{maybeThesis}
\not \boolean{maybeLNCS}}
\newtheorem{standardlocalcounter}{Dummy}[section]
\theoremstyle{plain}    
\newtheorem{theorem}[standardlocalcounter]{Theorem}
\newtheorem{lemma}[standardlocalcounter]{Lemma}
\newtheorem{proposition}[standardlocalcounter]{Proposition}
\newtheorem{corollary}[standardlocalcounter]{Corollary}
\newtheorem{observation}[standardlocalcounter]{Observation}
\theoremstyle{definition}
\newtheorem{definition}[standardlocalcounter]{Definition}
\theoremstyle{remark}
\newtheoremstyle{meta}
  {3pt}
  {3pt}
  {\scshape \small }
  {}
  {\scshape \small }
  {:}
  { }
  {}
\theoremstyle{meta}
\newtheoremstyle{questions}
  {3pt}
  {3pt}
  {\sffamily \slshape}
  {}
  {\bfseries \sffamily \slshape}
  {:}
  { }
  {}
\theoremstyle{questions}
\or \boolean{maybeThesis} \or \boolean{maybeLNCS}}
\def\SetTime{\hours=\time
\global\divide\hours by 60
\minutes=\hours
\multiply\minutes by 60
\advance\minutes by-\time
\global\multiply\minutes by-1 }
\def\now{\number\hours:\ifnum\minutes<10 0\fi\number\minutes}
\newcommand{\formuladots}{\cdots}
\newcommand{\impl}{\vDash}
\newcommand{\nimpl}{\nvDash}
\newcommand{\proofsystemformat}[1]{\ensuremath{\mathfrak{#1}}}
\newcommand{\proofstd}{\ensuremath{\pi}}
\newcommand{\resknot}[1][k]{\proofsystemformat{R}({#1})}
\newcommand{\derivof}[4][\derives]
        {{\ensuremath{{#2} : {#3} \, {#1}\, {#4}}}}
\newcommand{\refof}[2]{\derivof{#1}{#2}{\falsenum}}
\newcommand{\deriveswithall}%
        {\vdash_{\!\!\!{\scriptscriptstyle \forall}}} 
\newcommand{\notderiveswithall}%
        {\nvdash_{\!\!\!{\scriptscriptstyle \forall}}} 
\newcommand{\clcfgtransitioncrammed}[2]%
        {\ensuremath{#1 \!\rightsquigarrow\! #2}}
\newcommand{\tvastd}{{\ensuremath{\alpha}}}
\newcommand{\fstd}{{\ensuremath{F}}}
\newcommand{\emptycl}{0}
\newcommand{\varx}{\ensuremath{x}}
\newcommand{\lita}{\ensuremath{a}}
\newcommand{\cla}{\ensuremath{A}}
\newcommand{\clc}{\ensuremath{C}}
\newcommand{\cld}{\ensuremath{D}}
\newcommand{\clausesetformat}[1]{\ensuremath{\mathbb{#1}}}
\newcommand{\clsc}{\clausesetformat{C}}
\newcommand{\clsd}{\clausesetformat{D}}
\newcommand{\termt}{\ensuremath{T}}
\newcommand{\pcpolyp}{P}
\newcommand{\pcmonm}{m}
\newcommand{\setsofvarsorlitlarge}[2]%
        {\mathit{#1}\left({#2}\right)}
\newcommand{\setsofvarsorlit}[2]%
        {\mathit{#1}({#2})}
\newcommand{\setsofvarsorlitcompact}[2]%
        {\mathit{#1}\bigl({#2}\bigr)}
\newcommand{\setsofvarsorlitsup}[3]%
        {\mathit{#1}^{#2}({#3})}
\newcommand{\setsofvarsorlitsuplarge}[3]%
        {\mathit{#1}^{#2}\left({#3}\right)}
\newcommand{\setsofvarsorlitsupcompact}[3]%
        {\mathit{#1}^{#2}\bigl({#3}\bigr)}
\newcommand{\derivabbrev}[2]{\bigl( #1 \vdash #2 \bigr)}
\newcommand{\derivabbrevsmall}[2]{( #1 \vdash #2 )}
\newcommand{\derivabbrevcompact}[2]{\bigl( #1 \vdash #2 \bigr)}
\newcommand{\refutabbrevsmall}[1]{\derivabbrevsmall{#1}{\falsenum}}
\newcommand{\refutabbrevcompact}[1]{\derivabbrevcompact{#1}{\falsenum}}
\newcommand{\genericformsmall}[2]{\mathit{#1}( #2 )}
\newcommand{\genericrefsmall}[3]%
    {{\mathit{#1}}_{#2}\refutabbrevsmall{#3}}
\newcommand{\genericrefcompact}[3]%
    {{\mathit{#1}}_{#2}\refutabbrevcompact{#3}}
\newcommand{\genericderiv}[4]%
    {{\mathit{#1}}_{#2}\derivabbrev{#3}{#4}}
\newcommand{\genericderivsmall}[4]%
    {{\mathit{#1}}_{#2}\derivabbrevsmall{#3}{#4}}
\newcommand{\genericderivcompact}[4]%
    {{\mathit{#1}}_{#2}\derivabbrevcompact{#3}{#4}}
\newcommand{\generictaut}[3]%
    {{\mathit{#1}}_{#2}\derivabbrev{}{#3}}
\newcommand{\generictautcompact}[3]%
    {{\mathit{#1}}_{#2}\derivabbrevcompact{}{#3}}
\newcommand{\generictautsmall}[3]%
    {{\mathit{#1}}_{#2}\derivabbrevsmall{}{#3}}
\newcommand{\lengthofarg}[1]{\genericformsmall{L}{#1}}
\newcommand{\lengthref}[2][]{\genericrefsmall{L}{#1}{#2}}
\newcommand{\widthofarg}[2][]{\genericformsmall{W_{#1}}{#2}}
\newcommand{\widthref}[2][]{\genericrefsmall{W}{#1}{#2}}
\newcommand{\clspaceof}[2][]{\genericformsmall{Sp_{#1}}{#2}}
\newcommand{\clspaceref}[2][]{\genericrefsmall{Sp}{#1}{#2}}
\newcommand{\formulaformat}[1]{\ensuremath{\mathit{#1}}}
\renewcommand{\formulaformat}[1]{\mathit{#1}}
\newcommand{\transitionarrow}{\rightsquigarrow}
\newcommand{\pebcfgtransition}[2]%
    {\ensuremath{#1 \transitionarrow #2}}
\newcommand{\pebcfgtransitionsqueeze}[2]%
    {#1 \! \transitionarrow \! #2}
\newcommand{\formatpebblingprice}[1]{\text{\textsl{\textsf{#1}}}}
\newcommand{\Pebblingprice}[1]%
    {\formatpebblingprice{Peb}\bigl(#1\bigr)}
\newcommand{\pebblingpricecompact}[1]
    {\formatpebblingprice{Peb}\bigl(#1\bigr)}
\newcommand{\Bwpebblingprice}[1]%
    {\formatpebblingprice{BW-Peb}\bigl(#1\bigr)}
\newcommand{\bwpebblingpricecompact}[1]
    {\formatpebblingprice{BW-Peb}\bigl(#1\bigr)}
\newcommand{\pebpersistentsymbol}{\bullet}
\newcommand{\pebvisitingsymbol}{\emptyset}
\newcommand{\bwpebpricepersistent}[1]%
    {\formatpebblingprice{BW-Peb}^{\pebpersistentsymbol}(#1)}
\newcommand{\Bwpebpricepersistent}[1]%
    {\formatpebblingprice{BW-Peb}^{\pebpersistentsymbol}\bigl(#1\bigr)}
\newcommand{\bwpebpricevisiting}[1]%
    {\formatpebblingprice{BW-Peb}^{\pebvisitingsymbol}(#1)}
\newcommand{\Bwpebpricevisiting}[1]%
    {\formatpebblingprice{BW-Peb}^{\pebvisitingsymbol}\bigl(#1\bigr)}
\newcommand{\pebpricepersistent}[1]%
    {\formatpebblingprice{Peb}^{\pebpersistentsymbol}(#1)}
\newcommand{\Pebpricepersistent}[1]%
    {\formatpebblingprice{Peb}^{\pebpersistentsymbol}\bigl(#1\bigr)}
\newcommand{\pebpricevisiting}[1]%
    {\formatpebblingprice{Peb}^{\pebvisitingsymbol}(#1)}
\newcommand{\Pebpricevisiting}[1]%
    {\formatpebblingprice{Peb}^{\pebvisitingsymbol}\bigl(#1\bigr)}
\newcommand{\bwpebblingpriceempty}[1]%
    {\formatpebblingprice{BW-Peb}^{\pebvisitingsymbol}(#1)}
\newcommand{\bwpebblingpriceemptycompact}[1]%
    {\formatpebblingprice{BW-Peb}^{\pebvisitingsymbol}\bigl(#1\bigr)}
\newcommand{\stoptime}{\tau}
\newcommand{\pebcontr}[2][G]{\ensuremath{\formulaformat{Peb}^{#2}_{#1}}}
\newcommand{\pebdeg}{\ensuremath{d}}
\newcommand{\pebaxcompact}[2]%
        [\pebdeg]{\ensuremath{\formulaformat{Ax}^{#1} \bigl(#2 \bigr)}}
\newcommand{\pqrxvar}[6]%
    {\ensuremath{\stdnot{\varx({#1})}_{#2} \lor \stdnot{\varx({#3})}_{#4} \lor %
    \sourceclausexvar[#6]{#5}}}
\newcommand{\pqr}[6]%
    {\ensuremath{\stdnot{#1}_{#2} \lor \stdnot{#3}_{#4} \lor %
    \sourceclausenodisplay[#6]{#5}}}
\newcommand{\pqrstd}{\pqr{p}{i}{q}{j}{r}{l}}
\newcommand{\pqrall}[6]%
        {\setdescrcompact
        {\pqr{#1}{#2}{#3}{#4}{#5}{#6}}{#2,#4 \in \intnfirst{\pebdeg}}}
\newcommand{\pqrallstd}%
        {\setdescrcompact{\pqrstd}{i,j \in \intnfirst{\pebdeg}}}
\newcommand{\sourceclausexvar}[2][n]%
        {\Lor_{#1 = 1}^{\pebdeg} \varx({#2})_{#1}}
\newcommand{\subsourceclausexvar}[3][n]%
        {\Lor_{#1 = {#2}}^{\pebdeg} \varx({#3})_{#1}}
\newcommand{\sourceclausexvarnodisplay}[2][n]%
        {\textstyle \Lor_{#1 = 1}^{\pebdeg} \varx({#2})_{#1}}
\newcommand{\sourceclausenodisplay}[2][n]%
        {\textstyle \Lor_{#1 = 1}^{\pebdeg} #2_{#1}}
\newcommand{\tseitinnot}[2]{\ensuremath{\formulaformat{T}_{#1,#2}}}
\newcommand{\tseitinparity}[2]{\ensuremath{\formulaformat{PARITY}_{#1,#2}}}
\newcommand{\relativisation}[1]%
    {\ensuremath{\formulaformat{Rel}\bigl(#1 \bigr)}}
\newcommand{\extPHPnot}[2]
    {\ensuremath{\extendedversion{\formulaformat{PHP}}^{#1}_{#2}}}
\newcommand{\extendedversion}[1]{\widetilde{#1}}
\newcommand{\formatfunctiontosubconfiguration}[1]{\mathsf{#1}}
\newcommand{\formatfunctiontomulti}[1]{\mathcal{#1}}
\DeclareMathOperator{\dummystar}{*}
\newcommand{\pebblingcontrNT}[2][G]%
 {\ensuremath{\dummystar\!\!\formulaformat{Peb}^{#2}_{#1}}}
\newcommand{\somenodetrueclausedeg}[2]{\formulaformat{All}_{#1}^{+}({#2})}
\newcommand{\slashedstrickenletter}[1]{{\backslash\mkern-9mu #1}}
\newcommand{\strikethroughcommand}[1]{\slashedstrickenletter{#1}}
\newcommand{\abovevertices}[2][G]%
    {{#1}_{#2}^{\hspace{-0.2 pt}\triangledown}}
\newcommand{\aboveverticesNR}[2][G]%
    {{#1}_{\strikethroughcommand{#2}}^{\hspace{-0.3 pt}\triangledown}}
\newcommand{\belowvertices}[2][G]%
    {{#1}^{#2}_{\hspace{-0.6 pt}\vartriangle}}
\newcommand{\belowverticesNR}[2][G]%
    {{#1}^{\strikethroughcommand{#2}}_{\hspace{-0.6 pt}\vartriangle}}
\newcommand{\lpebblingpricecompact}[1]%
    {\formatpebblingprice{L-Peb}\bigl(#1\bigr)}
\newcommand{\scnot}[2]{#1 \langle #2 \rangle}
\newcommand{\scnotcompact}[2]{#1 \bigl\langle #2 \bigr\rangle}
\newcommand{\spcanonconfcompact}[1]%
        {\formatfunctiontosubconfiguration{canon}\bigl({#1}\bigr)}
\newcommand{\spprojsubsub}[4]%
    {\formatfunctiontosubconfiguration{proj}_{\scnot{#1}{#2}}(\scnot{#3}{#4})}
\newcommand{\spprojsubsubcompact}[4]%
    {\formatfunctiontosubconfiguration{proj}_{\scnot{#1}{#2}}%
    \bigl(\scnot{#3}{#4}\bigr)}
\newcommand{\spprojsubconf}[3]%
    {\formatfunctiontosubconfiguration{proj}_{\scnot{#1}{#2}}({#3})}
\newcommand{\spprojsubconfcompact}[3]%
    {\formatfunctiontosubconfiguration{proj}_{\scnot{#1}{#2}}\bigl({#3}\bigr)}
\newcommand{\spprojconfsub}[3]%
    {\formatfunctiontosubconfiguration{proj}_{#1}(\scnot{#2}{#3})}
\newcommand{\spprojconfsubcompact}[3]%
    {\formatfunctiontosubconfiguration{proj}_{#1}\bigl(\scnot{#2}{#3}\bigr)}
\newcommand{\spprojconfconf}[2]%
    {\formatfunctiontosubconfiguration{proj}_{#1}({#2})}
\newcommand{\spprojconfconfcompact}[2]%
    {\formatfunctiontosubconfiguration{proj}_{#1}\bigl({#2}\bigr)}
\newcommand{\spclossubcompact}[2]%
        {\formatfunctiontoset{cl}\bigl(\scnotcompact{#1}{#2}\bigr)}
\newcommand{\spintersubcompact}[2]%
        {\formatfunctiontoset{int}\bigl(\scnotcompact{#1}{#2}\bigr)}
\newcommand{\spcoversubcompact}[2]%
        {\formatfunctiontoset{cover}\bigl(\scnotcompact{#1}{#2}\bigr)}
\newcommand{\spcoverconfcompact}[1]%
        {\formatfunctiontoset{cover}\bigl({#1}\bigr)}
\newcommand{\spinducedblack}[1]%
    {\formatfunctiontoset{Bl} (#1)}
\newcommand{\spinducedwhite}[1]%
    {\formatfunctiontoset{Wh} (#1)}
\newcommand{\spinducedblackcompact}[1]%
    {\formatfunctiontoset{Bl} \bigl(#1 \bigr)}
\newcommand{\spinducedwhitecompact}[1]%
    {\formatfunctiontoset{Wh} \bigl(#1 \bigr)}
\newcommand{\pathclausedeg}[2][\pebdeg]%
    {\somenodetrueclausedeg[#1]{\vertexpath{#2}}}
\newcommand{\pathclauseNRdeg}[2][\pebdeg]%
    {\somenodetrueclausedeg[#1]{\vertexpathNR{#2}}}
\newcommand{\blacktruthdegexplicit}[4]%
        {\setdescrcompact
        {{\textstyle \Lor_{#2 = 1}^{#3} {#1}_{#2}}}
        {{#1} \in {#4}}}
\newcommand{\binsubtree}[1]{T^{#1}}
\newcommand{\vertexpath}[1]{{P}^{#1}}
\newcommand{\vertexpathNR}[1]{{P}_{*}^{#1}}
\newcommand{\unrelatedNP}[1]%
        {T \setminus \bigl(\binsubtree{#1} \unionSP \vertexpath{#1} \bigr)}
\newcommand{\unrelatedsmallNP}[1]%
        {T \setminus (\binsubtree{#1} \unionSP \vertexpath{#1} )}
\newcommand{\pyramidgraph}[1][]{\Pi_{#1}}
\newcommand{\abovelevelblockerminsizecompact}%
    [2]{L_{\succeq{#1}}\bigl({#2}\bigr)}
\newcommand{\necessaryhidingvert}[2]%
{{#1}{\scriptstyle{\llfloor {#2} \rrfloor}}}
\newcommand{\Klawepropertyprefix}{Limited hiding-cardinality\xspace}
\newcommand{\klawepropacronym}{LHC property\xspace}
\newcommand{\nongenklaweprop}%
{non-generalized \Klawepropertyprefix property\xspace}
\newcommand{\nongenklawepropacronym}%
{non-generalized \klawepropacronym}
\newcommand{\nongenklawepropacronymWithParam}%
{(non-generalized) \klawepropacronym}
\newcommand{\siblingnonreachabiblitypropertynoref}%
{Sibling non-reachability property\xspace}
\newcommand{\Siblingnonreachabiblitypropertynoref}%
{Sibling non-reachability property\xspace}
\newcommand{\siblingnonreachabiblityproperty}%
{\siblingnonreachabiblitypropertynoref~%
\ref{property:sibling-non-reachability-property}\xspace}
\newcommand{\Siblingnonreachabiblityproperty}%
{\Siblingnonreachabiblitypropertynoref~%
\ref{property:sibling-non-reachability-property}\xspace}
\newcommand{\introducetermanmpctext}%
    {a \introduceterm{\mpctext{}}\xspace}
\newcommand{\introducetermamultipebblingtext}%
  {a \introduceterm{\multipebblingtext{}}\xspace}
\newcommand{\blobpebblingtext}{blob-pebbling\xspace}
\newcommand{\multipebblingtext}{\blobpebblingtext}
\newcommand{\mpcostblack}[1]%
        {\formatpebblingprice{cost}_{\mpcblacks}( #1 )}
\newcommand{\mpcostwhite}[1]%
        {\formatpebblingprice{cost}_{\mpcwhites}( #1 )}
\newcommand{\blobpebblingpricecompact}[1]%
    {\formatpebblingprice{Blob-Peb}\bigl(#1\bigr)}
\newcommand{\multipebblingpricecompact}[1]%
    {\formatpebblingprice{Blob-Peb}\bigl(#1\bigr)}
\newcommand{\mpcblacks}{\formatfunctiontomulti{B}}
\newcommand{\mpcwhites}{\formatfunctiontomulti{W}}
\newcommand{\mpscnotcompact}[2]%
        {\big[ {#1} \big] \bigl\langle {#2} \bigr\rangle}
\newcommand{\mpctext}{\blobpebblingtext con\-fig\-u\-ra\-tion\xspace}
\newcommand{\chargeablevertices}[1]%
{\formatfunctiontoset{chargeable}({#1}) }
\newcommand{\chargeableverticescompact}[1]%
{\formatfunctiontoset{chargeable}\bigl({#1}\bigr) }
\newcommand{\blackschargedfor}[1][]%
    {\mpcblacks_{#1}}
\newcommand{\whiteschargedfor}[1][]%
    {\mpcwhites_{#1}^{\hspace{-0.3 pt}\vartriangle}}
\newcommand{\whitesbelowjustblocked}%
    {\mpcwhites_{B}^{\hspace{-0.3 pt}\vartriangle}}
\newcommand{\whitesbelowhidden}%
    {\mpcwhites_{H}^{\hspace{-0.3 pt}\vartriangle}}
\newcommand{\whitestight}%
    {\mpcwhites_{T}^{\hspace{-0.3 pt}\vartriangle}}
\DeclareFontFamily{OT1}{pzc}{}
\DeclareFontShape{OT1}{pzc}{m}{it}{<-> s * [1.200] pzcmi7t}{}
\DeclareMathAlphabet{\mathpzc}{OT1}{pzc}{m}{it}
\renewcommand{\proofsystemformat}[1]{\mathcal{#1}}
\renewcommand{\emptycl}{\bot}
\renewcommand{\refutabbrevsmall}[1]{\derivabbrevsmall{#1}{\!\bot}}
\renewcommand{\refutabbrevcompact}[1]{\derivabbrevcompact{#1}{\!\bot}}
\renewcommand{\refof}[2]{\derivof{#1}{#2}{\bot}}
\renewcommand{\extPHPnot}[2]%
    {\ephpnot{#1}{#2}}
\newcommand{\ephpnot}[2]%
    {\vphantom{\extendedversion{\formulaformat{PHP}}}
      {\smash{\extendedversion{\formulaformat{PHP}}}
        \vphantom{\formulaformat{PHP}}}^{#1}_{#2}}
\newcommand{\efphpnot}[2]%
    {\vphantom{\extendedversion{\formulaformat{FPHP}}}
      {\smash{\extendedversion{\formulaformat{FPHP}}}
        \vphantom{\formulaformat{FPHP}}}^{#1}_{#2}}
\newcommand{\ontophpnot}[2]%
    {\formulaformat{Onto}\text{-}\formulaformat{PHP}^{#1}_{#2}}
\newcommand{\ontofphpnot}[2]%
    {\formulaformat{Onto}\text{-}\formulaformat{FPHP}^{#1}_{#2}}
\renewcommand{\tseitinnot}[2]{\formulaformat{Ts}({#1,#2})}
\renewcommand{\tseitinparity}[2]{\formulaformat{PARITY}_{#1,#2}}
\newcommand{\epopnot}[1]%
    {\extendedversion{\formulaformat{POP}}_{#1}}
\newcommand{\elopnot}[1]%
    {\extendedversion{\formulaformat{LOP}}_{#1}}
\newcommand{\sinkstd}{z}
\newcommand{\Lorspecfuncproj}[4][\pebdeg]%
    {\Lor_{{#2} \in \clpospart{#3}} \specfuncwithvecarg[#1]{#4}{#2}}    
\newcommand{\Lorspecnotfuncproj}[4][\pebdeg]%
    {\Lor_{\olnot{#2} \in \clnegpart{#3}} \specnotfuncwithvecarg[#1]{#4}{#2}}
\newcommand{\witnessproj}[4][\funcpebc]%
    {\Lorspecfuncproj{#2}{#4}{#1} \lor \Lorspecnotfuncproj{#3}{#4}{#1}}
\newcommand{\Lorspecfuncprojnodisplay}[4][\pebdeg]%
    {\Lornodisplay_{{#2} \in \clpospart{#3}} \specfuncwithvecarg[#1]{#4}{#2}}    
\newcommand{\Lorspecnotfuncprojnodisplay}[4][\pebdeg]%
    {\Lornodisplay_{\olnot{#2} \in \clnegpart{#3}} \specnotfuncwithvecarg[#1]{#4}{#2}}
\newcommand{\witnessprojnodisplay}[4][\funcpebc]%
    {\Lorspecfuncprojnodisplay{#2}{#4}{#1} \lor \Lorspecnotfuncprojnodisplay{#3}{#4}{#1}}
\newcommand{\extendedpebcontrwithfunc}[3][G]%
    {{\vphantom{\formulaformat{Peb}}\smash{\extendedversion{\formulaformat{Peb}}}}^{#2}_{#1}[{#3}]}
\newcommand{\pebcontrwithfuncNT}[3][G]%
     {\dummystar\!\!\formulaformat{Peb}^{#2}_{#1}[{#3}]}
\newcommand{\genericpebcontrNT}[2][G]%
     {\pebcontrwithfuncNT[#1]{#2}{\funcpebc}}
\newcommand{\funcpebc}[1][]{f_{#1}}
\newcommand{\varvec}[1]{\vec{#1}}
\newcommand{\specfuncwithvecarg}[3][\pebdeg]%
	   {{#2}_{#1}(\varvec{#3})}
\newcommand{\specnotfuncwithvecarg}[3][\pebdeg]%
	   {\lnot {#2}_{#1}(\varvec{#3})}
\newcommand{\funcwithvecarg}[2][\pebdeg]%
           {\specfuncwithvecarg[{#1}]{\funcpebc}{#2}}
\newcommand{\notfuncwithvecarg}[2][\pebdeg]%
           {\specnotfuncwithvecarg[{#1}]{\funcpebc}{#2}}
\newcommand{\funcwithvecargstd}%
	   {\funcwithvecarg[\pebdeg]{v}}
\newcommand{\notfuncwithvecargstd}%
	   {\notfuncwithvecarg[\pebdeg]{v}}
\newcommand{\specfuncwithargspebcontr}[3][\pebdeg]%
	   {{#2}_{#1}({#3}_1, \ldots, {#3}_{\pebdeg})}
\newcommand{\specnotfuncwithargspebcontr}[3][\pebdeg]%
	   {\lnot {#2}_{#1}({#3}_1, \ldots, {#3}_{\pebdeg})}
\newcommand{\funcwithargspebcontr}[2][\pebdeg]%
	   {{\funcpebc}_{#1}({#2}_1, \ldots, {#2}_{\pebdeg})}
\newcommand{\notfuncwithargspebcontr}[2][\pebdeg]%
	   {\lnot {\funcpebc}_{#1}({#2}_1, \ldots, {#2}_{\pebdeg})}
\newcommand{\funcwithargspebcontrstd}%
	   {\funcwithargspebcontr[\pebdeg]{v}}
\newcommand{\notfuncwithargspebcontrstd}%
	   {\notfuncwithargspebcontr[\pebdeg]{v}}
\newcommand{\clpospart}[1]{#1^{+}}
\newcommand{\clnegpart}[1]{#1^{-}}
\newcommand{\fsubsttext}[1][{\funcpebc[\pebdeg]}]%
    {${#1}$-substitution\xspace}
\newcommand{\ntrues}{k}
\newcommand{\ktrue}[2][]{{\mathit{thr}}_{#1}^{#2}}
\newcommand{\ktruewithargs}[2]%
    {\ktrue[\pebdeg]{#1}({#2}_1, \ldots, {#2}_{\pebdeg})}
\newcommand{\ktruepebcontrtext}[1][\ntrues]%
    {${#1}$-true-pebbling contradiction\xspace}
\newcommand{\Xor}{{\textstyle \bigoplus}}
\newcommand{\xorvertex}[3][\pebdeg]{\Xor_{{#3}=1}^{#1} {#2}_{#3}}
\newcommand{\notxorvertex}[3][\pebdeg]{\lnot \Xor_{{#3}=1}^{#1} {#2}_{#3}}
\newcommand{\Lornotxor}[3][i]%
    {\Lor_{{#2} \in {#3}}\notxorvertex{#2}{#1}}
\newcommand{\Lorxor}[3][i]%
    {\Lor_{{#2} \in {#3}}\xorvertex{#2}{#1}}
\newcommand{\theauthorYF}{the first author\xspace}
\newcommand{\TheauthorJN}{The fourth author\xspace}
\newtheoremstyle{metacommenttheoremstyle}
    {3pt}
    {3pt}
    {\sffamily \itshape \scriptsize
    }
    {}
    {\bfseries \scshape \footnotesize }
    {:}
    { }
    {}
\theoremstyle{metacommenttheoremstyle}
\newtheorem{jncommentcontainer}{Jakob's comment}
\newtheorem{mlcommentcontainer}{Massimo's comment}
\newtheorem{mmcommentcontainer}{Mladen's comment}
\newtheorem{mvcommentcontainer}{Marc's comment}
\newtheorem{yfcommentcontainer}{Yuval's comment}
  \newcommand{\jncomment}[1]%
  {\begin{jncommentcontainer} \textcolor{blue}{#1} \end{jncommentcontainer}}
  \newcommand{\mlcomment}[1]%
  {\begin{mlcommentcontainer} \textcolor{OliveGreen}{#1} \end{mlcommentcontainer}}
  \newcommand{\mmcomment}[1]%
  {\begin{mmcommentcontainer} \textcolor{magenta}{#1} \end{mmcommentcontainer}}
  \newcommand{\mvcomment}[1]%
  {\begin{mvcommentcontainer} \textcolor{orange}{#1} \end{mvcommentcontainer}}
  \newcommand{\yfcomment}[1]%
  {\begin{yfcommentcontainer} \textcolor{red}{#1} \end{yfcommentcontainer}}
  \newcommand{\jncomment}[1]{}
  \newcommand{\mlcomment}[1]{}
  \newcommand{\mmcomment}[1]{}
  \newcommand{\mvcomment}[1]{}
  \newcommand{\yfcomment}[1]{}
    \theoremstyle{plain}
    \newtheorem{observation}[theorem]{Observation}
    \newtheorem{proposition}[theorem]{Proposition}
\newcommand{\parens}[1]{\ensuremath{(#1)}}
\newcommand{\negconfname}[1][]{\ensuremath{\mathrm{neg}_{#1}}}
\newcommand{\negconf}[2][]{\ensuremath{\negconfname[{#1}]\parens{#2}}}
\newcommand{\negativeconf}{negated configuration\xspace}
\newcommand{\negativeconfs}{negated configurations\xspace}
\newcommand{\negativeref}{negated refutation\xspace}
\newcommand{\tseitincharge}{\ensuremath{\chi}}
\newcommand{\sspace}{\ensuremath{s}}
\newcommand{\timet}{\ensuremath{t}}
\newcommand{\widthl}{\ensuremath{\ell}}
\newcommand{\termmeasurehead}{\ensuremath{\nu}}
\newcommand{\termmeasure}[1]{\ensuremath{\termmeasurehead\parens{#1}}}
\newcommand{\confmeasurehead}{\ensuremath{\mu}}
\newcommand{\confmeasure}[1]{\ensuremath{\confmeasurehead\parens{#1}}}
\newcommand{\graphg}{\ensuremath{G}}
\newcommand{\edgesete}{\ensuremath{E}}
\newcommand{\vertexsetv}{V}
\newcommand{\vertexsetu}{U}
\newcommand{\vertexv}{\ensuremath{v}}
\newcommand{\vertexu}{\ensuremath{u}}
\newcommand{\graphdeg}{\ensuremath{d}}
\newcommand{\expansionsize}{\ensuremath{s}}
\newcommand{\expansionfactor}{\ensuremath{\delta}}
\newcommand{\edgee}{\ensuremath{e}}
\newcommand{\cutedges}[1]{\partial(#1)}
\newcommand{\tmwidth}{\ensuremath{r}}
\newcommand{\termcplxm}{term complexity measure\xspace}
\newcommand{\confcplxm}{configuration complexity measure\xspace}
\newcommand{\Confcplxm}{Configuration complexity measure\xspace}
\newcommand{\confbound}{r}
\newcommand{\expparams}{(\expansionsize, \expansionfactor)}
\newcommand{\numsubfrm}{m}
\numberwithin{equation}{section}
\begin{document}

%
%

\title{From Small Space to Small Width in Resolution%
  \thanks{This is a 
    slightly revised and expanded version
    of the paper
    \cite{FLMNV14FromSmallSpace}
    which appeared in
    \emph{Proceedings of the 31st Symposium on
      Theoretical Aspects of Computer Science ({STACS}~'14).}}}

\author{Yuval Filmus}
\affil{Institute for Advanced Study, Princeton, NJ, USA}
\author{Massimo Lauria}
\author{Mladen Mik\v{s}a}
\author{Jakob Nordstr\"{o}m}
\author{Marc Vinyals}
\affil{KTH Royal Institute of Technology, Stockholm, Sweden}

\date{\today}

\maketitle

%
%

\begin{abstract}
  In 2003, Atserias and Dalmau resolved a major open question about
  the resolution proof system by establishing that the space
  complexity of CNF formulas is always an upper bound on the width
  needed to refute them. Their proof is beautiful but somewhat
  mysterious in that it relies heavily on tools from finite model
  theory.  We give an alternative, completely elementary proof that
  works by simple syntactic manipulations of resolution
  refutations. As a by-product, we develop a ``black-box'' technique
  for proving space lower bounds via a ``static'' complexity measure
  that works against any resolution refutation---previous techniques
  have been inherently adaptive.  We conclude by showing that the
  related question for polynomial calculus (i.e.,\ whether space is an
  upper bound on degree) seems unlikely to be resolvable by similar
  methods.
\end{abstract}

%
%

\thispagestyle{empty}

\pagestyle{fancy}     
\fancyhead{}
\fancyfoot{}
\renewcommand{\headrulewidth}{0pt}
\renewcommand{\footrulewidth}{0pt}

%
%
\fancyhead[CE]{\slshape FROM SMALL SPACE TO SMALL WIDTH IN RESOLUTION}
\fancyhead[CO]{\slshape \nouppercase{\leftmark}}
\fancyfoot[C]{\thepage}

\setlength{\headheight}{13.6pt}

%
%

\section{Introduction}

A \introduceterm{resolution proof} for, or 
\introduceterm{resolution refutation} of, an unsatisfiable formula~$F$
in conjunctive normal form (CNF) is a 
sequence of disjunctive clauses
$(C_1, C_2, \ldots, C_\stoptime)$, where every clause~$C_t$ 
is either a member of~$F$ or
is logically implied by two previous clauses, 
and where the final clause is the contradictory empty clause~$\emptycl$
containing no literals. 
Resolution is arguably
the most well-studied proof system in propositional proof complexity,
and has served as a natural starting point in the quest to prove lower
bounds for increasingly stronger proof systems on 
\introduceterm{proof length/size} (which for resolution is the 
number of clauses in a proof).
%

Resolution is also intimately connected to SAT solving
in that it
lies at the foundation of state-of-the-art SAT solvers using so-called
conflict-driven clause learning (CDCL).
%
This connection has motivated the study of
\introduceterm{proof space}
as a second interesting complexity measure for resolution.
The space usage at some step~$t$ in a proof 
is measured as
the number of
clauses occurring before~$C_t$ that will be used to derive clauses
after~$C_t$, and the space of a proof is obtained by taking the
maximum over all steps~$t$. 

For both of these complexity measures, it turns out that a key role is
played by the auxiliary measure of \introduceterm{width}, \ie the size
of a largest clause in the proof. In a celebrated result, Ben-Sasson
and Wigderson~\cite{BW01ShortProofs} showed that there are short
resolution refutations of a formula if and only if there are also
(reasonably) narrow ones, and almost all known lower bounds on
resolution length can be (re)derived using this connection. 
In 2003, Atserias and Dalmau
(journal version in~%
\cite{AD08CombinatoricalCharacterization})
established that width also provides lower bounds on space,
resolving a problem that had been open since the study of space
complexity of propositional proofs was initiated in the late 1990s
in~\cite{ABRW02SpaceComplexity,ET01SpaceBounds}.
This means that for space also, almost all known lower bounds can be
rederived by using width lower bounds and appealing 
to~\cite{AD08CombinatoricalCharacterization}.
This is not a
two-way connection, however, in that formulas of almost worst-case
space complexity may require only constant width as shown in~%
\cite{BN08ShortProofs}.
%

\subsection{Our Contributions}

The starting point of our work is the lower bound on space in terms of
width in~\cite{AD08CombinatoricalCharacterization}. This is a very
elegant but also magical proof in that it translates the whole problem
to Ehrenfeucht--Fraïssé games 
in
finite model theory, and shows that
resolution space and width correspond to strategies for two opposite
players in such games. Unfortunately, this also means that one obtains
essentially no insight into what is 
happening
on the proof complexity
side (other than that the bound on space in terms of width is
true). It has remained an open problem to give a more explicit, proof
complexity theoretic 
argument.

In this paper, we give a purely combinatorial proof in terms of simple
syntactic manipulations of resolution refutations. To summarize in one
sentence, we study the conjunctions of clauses in memory at each time
step in a small-space refutation, 
negate these conjunctions and then expand them to conjunctive normal form
again, and finally argue that the new sets of clauses listed in
reverse order (essentially) constitute a small-width refutation of the
same formula.%
\footnote{We recently learned that a similar proof, though phrased in
  a slightly different language, was obtained independently by 
  Razborov~\cite{Razborov14personalcommunication}.}   

This new, simple proof also allows us to obtain a new technique for
proving space lower bounds. This approach is reminiscent
of~\cite{BW01ShortProofs} 
in that one defines a static ``progress measure'' on refutations and
argues that when a refutation has made substantial progress it must
have high complexity with respect to the proof complexity measure
under study. Previous 
lower bounds on space 
have been inherently
adaptive and in that sense less explicit. 

One important motivation for our work was the hope that a
simplified proof of the space-width \mbox{inequality} would serve as a
stepping stone to resolving the analogous question for the polynomial
\mbox{calculus} proof system. Here the the width of clauses corresponds to the
\introduceterm{degree} of polynomials, space is measured as the total
number of monomials of all polynomials currently in memory, and the
problem is to determine whether space and degree in
polynomial calculus are related in the same way as are space
and width in resolution. A possible approach for attacking this
question was proposed in~\cite{BG13Pseudopartitions}.
In~\cite{FLMNV13TowardsUnderstandingPC} we obtained a result analogous
to~\cite{BN08ShortProofs} that there are formulas of worst-case space
complexity that require only constant degree. The question of whether
degree lower bounds imply space lower bounds remains open, however,
and other results in~\cite{FLMNV13TowardsUnderstandingPC} can be
interpreted as implying that the techniques
in~\cite{BG13Pseudopartitions} probably are not sufficient to resolve
this question. Unfortunately, as discussed towards the end of this
paper we also show that it appears unlikely that this problem can be
addressed by methods similar to our proof of the corresponding
inequality for resolution.

\subsection{Outline of This Paper}

The rest of this paper is organized as follows.
After some brief preliminaries in
\refsec{sec:prelims},
we present the new proof of the space-width inequality in~%
\cite{AD08CombinatoricalCharacterization} 
in
\refsec{sec:spacetowidth}.
In
\refsec{sec:space-lower-bound}
we showcase the new technique for space lower bounds by studying
so-called Tseitin formulas.
\refsec{sec:spacetodegree}
explains why we believe it is unlikely that our methods will extend to
polynomial calculus. Some concluding remarks are given in
\refsec{sec:conclusion}.

\section{Preliminaries}
\label{sec:prelims}

Let us start by a brief review of the preliminaries. The following
material is standard and can be found, e.g., in  the survey~%
\cite{Nordstrom10SurveyLMCS}.


A \introduceterm{literal} over a Boolean variable $\varx$ is either
the variable $\varx$ itself (a \introduceterm{positive literal}) or
its negation that is denoted either as $\lnot \varx$ or~$\olnot{\varx}$ (a
\introduceterm{negative literal}). 
We define
$\olnot{\olnot{\varx}} = \varx$.
A \introduceterm{clause} 
$\clc = \lita_1 \lor \formuladots \lor \lita_{\clwidth}$ 
is a disjunction of
literals and a \introduceterm{term} $\termt = \lita_1 \land
\formuladots \land \lita_{\clwidth}$ is a conjunction of literals.
We denote the empty clause by~$\emptycl$ and the empty term
by~$\emptyset$. 
The logical negation of a clause 
$\clc = \lita_1 \lor \formuladots \lor \lita_\clwidth$ 
is the term 
$\olnot{\lita}_1 \land \formuladots \land \olnot{\lita}_\clwidth$ 
that consists of the
negations of the literals in the clause.
We will sometimes use the notation
$\lnot C $ or $ \olnot{C} $
for the term corresponding to the negation of a clause and
$ \lnot T $ or $ \olnot{T} $
for the clause negating a term.
%
A clause (term) is \introduceterm{trivial} if it contains both a
variable and its negation. For the proof systems we study, trivial
clauses and terms can always be eliminated without any loss of
generality. 


A clause $\clc'$ \introduceterm{subsumes} clause $\clc$ %
if every literal from $\clc'$ also appears in $\clc$.
A \introduceterm{$\clwidth$\nobreakdash-clause
  ($\clwidth$\nobreakdash-term}) is a clause (term) that contains at
most $\clwidth$~literals. A \introduceterm{CNF formula} $\fstd =
\clc_1 \land \formuladots \land \clc_\numsubfrm$ is a conjunction of
clauses, and a \introduceterm{DNF formula} $\fstd = \termt_1 \lor
\formuladots \lor \termt_\numsubfrm$ is a disjunction of terms.
A \introduceterm{\kcnfform{} (\mbox{$\clwidth$-DNF} formula)} is a \cnfform
(DNF formula) consisting of \xclause{\clwidth}{}s ($\clwidth$-terms).
We think of clauses, terms, and CNF formulas as sets:
the order of elements is irrelevant and there are no repetitions.

Let us next describe a slight generalization of the resolution proof
system by
Kraj{\'\i}{\v{c}}ek~\cite{K01OnTheWeak}, who introduced the family
of \introduceterm{$\tmwidth$-DNF resolution} proof systems,
denoted~$\resknot[\tmwidth]$,
as an intermediate step between resolution and depth-$2$ Frege
systems.   
%
An \introduceterm{$\tmwidth$-DNF resolution
  con\-fig\-u\-ra\-tion}~$\clsc$ is a set of $\tmwidth$-DNF
formulas. An \introduceterm{$\tmwidth$-DNF resolution refutation}
of a
CNF formula~$F$ is a sequence of configurations 
$(\clsc_0, \ldots,  \clsc_\stoptime)$ 
such that $\clsc_0 = \emptyset$, $\emptycl \in
\clsc_{\stoptime}$, and for $1 \leq \timet \leq \stoptime$ we obtain
$\clsc_\timet$ from $\clsc_{\timet - 1}$ by one of the following
steps:
\begin{description}
\item[Axiom download] $\clsc_\timet = \clsc_{\timet - 1} \union
  \set{\cla}$, where $A$ is a clause in $F$
  (sometimes referred to as an \introduceterm{axiom clause}).
\item[Inference] $\clsc_\timet = \clsc_{\timet - 1} \union
  \set{\cld}$, where $\cld$ is inferred by one of the following rules
  (where $G, H$ denote $\tmwidth$-DNF formulas, $\termt, \termt'$
  denote $\tmwidth$-terms, and $\lita_1, \ldots, \lita_\tmwidth$
  denote literals):
  \begin{description}
    \vspace{3pt}
  \item[$\tmwidth$-cut]
    \AxiomC{$(\lita_1 \land \formuladots \land \lita_{\tmwidth'}) \lor G$}
    \AxiomC{$\olnot{\lita}_1 \lor \formuladots \lor \olnot{\lita}_{\tmwidth'} \lor H$}
    \BinaryInfC{$G \lor H$}
    \DisplayProof,
    where $\tmwidth' \leq \tmwidth$.
    \vspace{3pt}
  \item[$\land$-introduction] 
    \AxiomC{$G \lor \termt$}
    \AxiomC{$G \lor \termt'$}
    \BinaryInfC{$G \lor (\termt \land \termt')$}
    \DisplayProof,
    as long as $\setsize{\termt \union \termt'} \leq \tmwidth$.
    \vspace{3pt}
  \item[$\land$-elimination] 
    \AxiomC{$G \lor \termt$}
    \UnaryInfC{$G \lor \termt'$}
    \DisplayProof
    for any non-empty $\termt' \subseteq \termt$.
    \vspace{3pt}
  \item[Weakening]
    \AxiomC{$G$}
    \UnaryInfC{$G \lor H$}
    \DisplayProof
    for any $\tmwidth$-DNF formula $H$.
    \vspace{3pt}
  \end{description}
\item[Erasure] $\clsc_\timet = \clsc_{\timet - 1} \setminus
  \set{\clc}$, where $\clc$ is an $\tmwidth$-DNF formula in
  $\clsc_{\timet - 1}$.
\end{description}


For
$\tmwidth = 1$ 
we obtain the standard \introduceterm{resolution} proof
system. 
In this case the only nontrivial
inference rules are weakening and $\tmwidth$-cut, 
where the former can be eliminated without loss of generality (but is
sometimes convenient to have for technical purposes) and the
latter simplifies to the \introduceterm{resolution rule}
\begin{equation}
  \AxiomC{$\clc \lor \varx$}
  \AxiomC{$\cld \lor \olnot{\varx}$}
  \BinaryInfC{$\clc \lor \cld$}
  \DisplayProof
  \eqperiod
\end{equation}
We identify a resolution configuration~$\clsc$ with the 
CNF formula
$\Land_{\clc \in \clsc} \clc$. 

The \introduceterm{length} $\lengthofarg{\proofstd}$ of an
$\tmwidth$-DNF resolution refutation $\proofstd$ is the number of
download and inference steps, and the \introduceterm{space}
$\clspaceof{\proofstd}$ 
is the maximal number of
$\tmwidth$-DNF formulas in any configuration in $\proofstd$. We define
the length $\lengthref[{\resknot[\tmwidth]}]{F}$ and the space
$\clspaceref[{\resknot[\tmwidth]}]{F}$ of refuting a formula~$F$ in
$\tmwidth$-DNF resolution 
by taking the minimum 
over all refutations $F$
with respect to the relevant measure.
%
%
We drop the proof system
$\resknot[\tmwidth]$
from this notation when it is clear from context.

For the resolution proof system, we also define the
\introduceterm{width}
$\widthofarg{\proofstd}$ of a resolution refutation~$\proofstd$ as the
size of a largest clause in $\proofstd$, and taking 
the minimum over all resolution refutations 
we obtain the width
$\widthref{F}$ of refuting $F$.
We remark that in the context of resolution the
space measure defined above is sometimes referred to as
\introduceterm{clause space}
to distinguish it from other space measures studied for this proof system.

\section{From Space to Width} 
\label{sec:spacetowidth}

In this section we 
present
our new combinatorial proof that width is a
lower bound for  clause space in resolution.
The formal statement of the theorem is as follows
(in this article all CNF formulas are assumed to
be non-trivial in that they do not contain the contradictory empty
clause). 

\begin{theorem}[\cite{AD08CombinatoricalCharacterization}]
  \label{th:spaceToWidth}
  Let~$F$ be a \kcnfform{} 
  and let~$\refof{\proofstd}{F}$ be a
  resolution refutation in space~$\clspaceof{\proofstd} =
  \sspace$. Then there is a resolution refutation~$\proofstd'$ of~$F$
  in width~$\widthofarg{\proofstd'} \leq \sspace + \clwidth - 3$.
\end{theorem}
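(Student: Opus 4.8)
The plan is to transform the space-$\sspace$ refutation $\proofstd = (\clsc_0, \ldots, \clsc_\stoptime)$ directly, configuration by configuration, via \emph{negated configurations}, exactly along the recipe sketched in the introduction. For a configuration $\clsc$, identify it (as the paper does) with the CNF formula $\Land_{\clc \in \clsc}\clc$, negate it, and re-expand to CNF: let $\negconf{\clsc}$ be the set of non-trivial transversal clauses $\Lor_{\clc \in \clsc}\olnot{\lita_\clc}$ obtained by picking one literal $\lita_\clc \in \clc$ from each $\clc \in \clsc$ and discarding any transversal containing a variable together with its negation. Since $\setsize{\clsc} \le \sspace$, every clause of $\negconf{\clsc}$ has width at most $\sspace$. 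The two endpoints are exactly right for a refutation read backwards: because $\clsc_0 = \emptyset$ the empty conjunction is true, so $\negconf{\clsc_0} = \set{\emptycl}$; and because $\emptycl \in \clsc_\stoptime$ the conjunction is contradictory, its negation is a tautology, and $\negconf{\clsc_\stoptime} = \emptyset$. I therefore read the sequence in reverse, $\negconf{\clsc_\stoptime}, \negconf{\clsc_{\stoptime-1}}, \ldots, \negconf{\clsc_0}$, which runs from no clauses to the contradiction $\emptycl$, and I will argue that the clauses appearing at each backward step can be produced by narrow resolution from the clauses already present together with the axioms of $F$.

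It then suffices to inspect the three kinds of forward step and see what each demands in reverse. \textbf{Erasure:} if $\clsc_\timet = \clsc_{\timet-1} \setminus \set{\clc}$, then $\clsc_{\timet-1} = \clsc_\timet \union \set{\clc}$, so each clause of $\negconf{\clsc_{\timet-1}}$ is a clause of $\negconf{\clsc_\timet}$ extended by one literal $\olnot\lita$ with $\lita \in \clc$; these are weakenings, obtained for free and still of width at most $\sspace$. \textbf{Inference:} if $\clsc_\timet = \clsc_{\timet-1}\union\set{\cld}$ with $\cld$ inferred from $\clsc_{\timet-1}$, a short combinatorial observation gives $\negconf{\clsc_\timet} \supseteq \negconf{\clsc_{\timet-1}}$ up to trivial clauses. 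Indeed, if $\cld = \cle' \lor \cle''$ is the resolvent of $\cle'\lor\varx,\ \cle''\lor\olnot\varx \in \clsc_{\timet-1}$, then any transversal $K$ of $\clsc_{\timet-1}$ avoiding all of $\olnot\cld$ would be forced to select the resolved literals $\varx$ and $\olnot\varx$ and hence be trivial; so every non-trivial $K$ already contains some $\olnot\lita$ with $\lita \in \cld$, whence $K = K \lor \olnot\lita \in \negconf{\clsc_\timet}$ and nothing new need be derived (weakening inferences are analogous). So, like erasure, the reverse of an inference is essentially free.

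The real work is the reverse of an \textbf{axiom download} $\clsc_\timet = \clsc_{\timet-1}\union\set{\cla}$ with $\cla \in F$ a $\clwidth$-clause. Here $\setsize{\clsc_{\timet-1}} \le \sspace - 1$, so every transversal $K$ of $\clsc_{\timet-1}$ has width at most $\sspace-1$, and the clauses of $\negconf{\clsc_\timet}$ are precisely the clauses $K \lor \olnot\lita$ with $\lita \in \cla$. Unlike in the inference case there is now no forcing, so I genuinely must recover each $K \in \negconf{\clsc_{\timet-1}}$: I would download the axiom $\cla = \Lor_{\lita\in\cla}\lita$ and resolve it successively against the clauses $K \lor \olnot\lita$, peeling off the literals of $\cla$ one at a time until $K$ remains. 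The widest intermediate clause is $(\cla\setminus\set{\lita})\lor K$, of width at most $(\clwidth-1)+(\sspace-1) = \sspace+\clwidth-2$. Sharpening this to the claimed $\sspace+\clwidth-3$ is where I expect the main obstacle to lie: it is a bookkeeping matter of exploiting trivial-literal collapses and the special structure of the first and last configurations to shave one literal off the widest step, and getting this constant exactly right (rather than settling for $\sspace+\clwidth-2$) is the one quantitatively delicate point.

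Finally I would assemble the pieces. Concatenating, for $\timet = \stoptime, \stoptime-1, \ldots, 1$, the short narrow derivations that turn $\negconf{\clsc_\timet}$ into $\negconf{\clsc_{\timet-1}}$ produces a single sequence of configurations forming a legitimate resolution refutation $\proofstd'$ of $F$: it starts from $\emptyset$, only ever downloads axioms of $F$ or applies resolution and weakening, and ends at $\negconf{\clsc_0} = \set{\emptycl}$. The only global point to check is that every clause is present in memory when it is used, which holds because $\negconf{\clsc_\timet}$ is available as a block before the step producing $\negconf{\clsc_{\timet-1}}$ is processed; and by the per-step analysis above no clause ever exceeds width $\sspace+\clwidth-3$, giving $\widthofarg{\proofstd'} \le \sspace + \clwidth - 3$.
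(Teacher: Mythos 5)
Your construction is the paper's own proof almost verbatim: the same negated-configuration/transversal-clause definition, the same reversal of the sequence, the same observation that erasure and inference steps reverse into weakenings (the paper proves this once via the semantic fact that $\clsc_{\timet-1} \impl \clsc_{\timet}$ and a minimal-implicant characterization of $\negconf{\clsc}$, where you argue it combinatorially per rule --- both are fine), and the same peeling derivation for reversed axiom downloads. The one genuine gap is exactly the point you flag: your analysis yields width $\sspace + \clwidth - 2$, and the mechanism you propose for recovering the extra $1$ (trivial-literal collapses and the structure of the first and last configurations) is not the right one and would not work in general.

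The actual fix is a preprocessing normalization of $\proofstd$ rather than anything about $\negconf{\clsc}$ itself. Suppose some axiom download occurs at a configuration $\clsc_{\timet}$ with $\setsize{\clsc_{\timet}} = \sspace - 1$. Then $\setsize{\clsc_{\timet+1}} = \sspace$ is already at the space bound, so the very next step of $\proofstd$ cannot be another download or an inference; it must be an erasure (the refutation cannot simply end there, since the downloaded clause is a nonempty axiom of $F$). One can therefore swap the download with the erasure that follows it (or cancel both if the erased clause is the axiom just downloaded) and obtain an equally valid space-$\sspace$ refutation. Iterating, one may assume without loss of generality that every configuration immediately preceding an axiom download has at most $\sspace - 2$ clauses. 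Then the transversal clause $K$ in your peeling derivation has width at most $\sspace - 2$, the widest intermediate clause has width at most $(\sspace - 2) + (\clwidth - 1) = \sspace + \clwidth - 3$, and (using that any refutation of a formula not containing $\emptycl$ has space at least $3$, so the downloaded axiom itself fits within the bound) the stated constant follows. With this normalization inserted, your proof is complete and coincides with the paper's.
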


The proof idea is to take the refutation $\proofstd$ in space~$\sspace$, 
negate  the configurations
one by one, 
rewrite them as equivalent sets of disjunctive clauses, and 
list these sets of clauses in reverse order.  
This forms the skeleton of the new refutation,
where all clauses have 
width at most $\sspace$.
To see this, note that
each configuration in the original 
refutation 
is the
conjunction of at most $\sspace$ clauses.
Therefore, the negation of such a configuration is a 
disjunction of at most $\sspace$ terms, which is equivalent (using
distributivity) to a conjunction of clauses of width at most~%
$\sspace$.
%
To obtain a legal resolution refutation,
we need to fill in the gaps between adjacent sets of clauses.
In this process the width
increases slightly from $\sspace$ to $\sspace + \clwidth - 3$.

Before presenting the full proof, we need some technical results. We start
by giving a formal definition of what a \negativeconf{} is.
\begin{definition}\label{def:NegativeConfiguration}
  The \introduceterm{\negativeconf{}}~$\negconf{\clsc}$ of a
  configuration~$\clsc$ is defined by induction on the number of
  clauses in $\clsc$:
  \begin{itemize}
  \item $\negconf{\emptyset} = \set{\emptycl}$,
  \item $\negconf{\clsc \union \set{\clc}} = \setdescr{\cld \lor
      \olnot{\lita}}{\cld \in \negconf{\clsc} \text{\ and\ } \lita \in
      \clc}$,
  \end{itemize}
  where we remove trivial and subsumed clauses from the final
  configuration.
\end{definition}

\ifthenelse{\boolean{conferenceversion}}{%
\mlcomment{Observation 5 removed from conference version}
}{%

Each clause of the original configuration contributes at most one
literal to each clause of the \negativeconf. Hence, the width of
the new clauses must be small.

\begin{observation}\label{obs:WidthSpaceInMirror}
  The width of any clause in the \negativeconf{}~$\negconf{\clsc}$ is at most $|\clsc|$.
\end{observation}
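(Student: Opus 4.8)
The plan is a straightforward induction on the number of clauses in the configuration $\clsc$, following the recursive shape of \refdef{def:NegativeConfiguration}. For the base case $\clsc = \emptyset$ we have $\negconf{\emptyset} = \set{\emptycl}$, and the empty clause has width $0 = \setsize{\emptyset}$, so the bound holds trivially.

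For the inductive step I would write $\clsc = \clsc_0 \union \set{\clc}$ with $\clc \notin \clsc_0$, so that $\setsize{\clsc} = \setsize{\clsc_0} + 1$, and invoke the induction hypothesis on $\negconf{\clsc_0}$. By the recursive clause of the definition, every clause of $\negconf{\clsc}$ has the form $\cld \lor \olnot{\lita}$ for some $\cld \in \negconf{\clsc_0}$ and some literal $\lita \in \clc$. Since appending a single literal to a clause increases its number of literals by at most one (and by none at all if $\olnot{\lita}$ already occurs in $\cld$), the induction hypothesis $\setsize{\cld} \leq \setsize{\clsc_0}$ yields $\setsize{\cld \lor \olnot{\lita}} \leq \setsize{\clsc_0} + 1 = \setsize{\clsc}$, which is exactly the desired bound.

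The only point that warrants any care — and it is the closest thing to an obstacle in an otherwise routine argument — is the prescription that we delete trivial and subsumed clauses from the final configuration. This cleanup step only ever removes clauses; it never widens a surviving clause and never introduces new ones. Hence the width bound established for the clauses produced by the recursion automatically continues to hold for the clauses that remain, and the induction goes through.
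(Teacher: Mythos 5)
Your proof is correct and is essentially the formalization of the paper's own (one-sentence) justification, namely that each clause of $\clsc$ contributes at most one literal to each clause of $\negconf{\clsc}$; the induction on $\setsize{\clsc}$ following \refdef{def:NegativeConfiguration}, and the remark that deleting trivial and subsumed clauses cannot increase width, are exactly the right way to make that precise.
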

}

In the proof we will use a different characterization of
\negativeconfs\ that is easier to work with.

\begin{proposition}\label{pr:AlternativeDefOfMirror}
  The \negativeconf{}~$\negconf{\clsc}$ is the set of all minimal
  (non-trivial)
  clauses~$\clc$ such that~$\lnot \clc$ implies the
  configuration~$\clsc$. That is,
  \begin{equation*}
    \negconf{\clsc} = \setdescr{\clc}{\lnot \clc \impl \clsc
      \text{\ and for every $\clc' \subseteq \clc$ it holds that
        $\lnot \clc' \nimpl \clsc$}}
    \!\eqperiod
  \end{equation*}
\end{proposition}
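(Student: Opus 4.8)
The plan is to first replace the semantic condition $\lnot\clc\impl\clsc$ by a purely syntactic ``hitting'' condition, and then to verify by induction on $\setsize{\clsc}$ that the recursive construction in \refdef{def:NegativeConfiguration} produces exactly the subset-minimal clauses satisfying this condition (where, as usual, minimality is understood with respect to proper subclauses).

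First I would record the elementary fact that a satisfiable term $T$ implies a clause $\clc_i$ if and only if $T$ and $\clc_i$ share a literal: if they share a literal, then any assignment satisfying $T$ satisfies that literal and hence $\clc_i$; conversely, if they are disjoint one can simultaneously satisfy every literal of $T$ and falsify every literal of $\clc_i$, where satisfiability of $T$ comes from $\clc$ being non-trivial. Since a configuration is the conjunction $\Land_{\clc_i \in \clsc} \clc_i$ and a term implies a conjunction exactly when it implies each conjunct, this yields the reformulation: for a non-trivial clause $\clc$, we have $\lnot\clc\impl\clsc$ if and only if for every $\clc_i\in\clsc$ there is a literal $\lita\in\clc$ with $\olnot\lita\in\clc_i$. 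In other words, $\clc$ must ``hit'' every clause of $\clsc$ in a complementary literal. I will write $M(\clsc)$ for the right-hand side of the asserted identity, i.e.\ the set of subset-minimal non-trivial clauses satisfying this hitting condition.

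With the reformulation in hand I would prove the proposition by induction on the number of clauses in $\clsc$. The base case $\clsc=\emptyset$ is immediate, since the empty conjunction is satisfied by everything and the unique minimal clause is $\emptycl$. For the inductive step, writing $\clsc^+ = \clsc \union \set{\clc}$, the reformulation shows that a non-trivial clause $\cle$ satisfies $\lnot\cle\impl\clsc^+$ exactly when $\lnot\cle\impl\clsc$ and, in addition, $\cle$ contains $\olnot\lita$ for some $\lita\in\clc$. By the induction hypothesis the subset-minimal clauses satisfying the first condition are precisely the members $\cld$ of $\negconf{\clsc}$, so every minimal hitting clause for $\clsc^+$ has the form $\cld\lor\olnot\lita$ with $\cld\in\negconf{\clsc}$ and $\lita\in\clc$, which is exactly the family generated in \refdef{def:NegativeConfiguration} before pruning. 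It then remains to check that deleting trivial clauses and clauses subsumed by others leaves precisely the subset-minimal clauses of this family, so that the pruned family coincides with $M(\clsc^+)$.

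The main obstacle I anticipate is exactly this last minimality bookkeeping, which has to be done in both directions. Concretely, I need to argue (i) that if $\cld\lor\olnot\lita$ is subset-minimal among all hitting clauses for $\clsc^+$, then $\cld$ is already subset-minimal for $\clsc$, so that $\cld$ survives in $\negconf{\clsc}$ and the clause is indeed generated; and (ii) that any generated clause which fails to be subset-minimal is strictly subsumed by another generated clause and is therefore removed by the pruning step, while trivial clauses never occur as minimal hitting clauses and are correctly discarded. I would also handle the degenerate situation in which the hitting condition can only be met by trivial clauses (for instance when $\clsc$ is unsatisfiable yet does not literally contain $\emptycl$), where both $\negconf{\clsc}$ and $M(\clsc)$ are empty. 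Once these points are settled, the pruned family equals $M(\clsc^+)$ and the induction closes.
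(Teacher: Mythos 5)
Your argument is sound and turns on the same central observation as the paper's proof, namely that for a non-trivial clause $\clc$ the semantic condition $\lnot\clc\impl\clsc$ is equivalent to the syntactic condition that $\clc$ contain the negation of some literal of every clause in $\clsc$ (your satisfy-$T$-and-falsify-$\clc_i$ argument is exactly the paper's flipping argument). Where you diverge is in the bookkeeping. The paper does no induction: it introduces the set $\clsd$ of all minimal clauses whose negations imply $\clsc$, proves mutual refinement (every clause of $\negconf{\clsc}$ contains a member of $\clsd$, and vice versa), and concludes immediately because neither family contains subsumed clauses; this bypasses exactly the per-step minimality analysis that you flag as your main obstacle. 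Your induction does close, but note that your item (i) as stated is not quite the claim you need: a minimal hitting clause $\cle$ for $\clsc\union\set{\clc}$ can admit decompositions $\cld\lor\olnot{\lita}$ in which $\cld$ is not minimal for $\clsc$. The correct claim is that \emph{some} decomposition with $\cld$ minimal exists; it is obtained by picking inside $\cle$ a minimal hitting sub-clause $\cld$ for $\clsc$ together with a literal $\olnot{\lita}$ hitting $\clc$, noting that $\cld\lor\olnot{\lita}\subseteq\cle$ already hits all of $\clsc\union\set{\clc}$, and invoking minimality of $\cle$ to force equality. With that repair your route gives slightly more than the paper's, namely an explicit certification that the pruning in \refdef{def:NegativeConfiguration} is exact at every recursive step, at the cost of a longer argument than the paper's direct double-inclusion finished off by anti-subsumption.
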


\begin{proof}
  
  Let us fix the configuration~$\clsc$ and let~$\clsd$ denote the set of
  all minimal clauses implying~$\clsc$.  
%
  We prove that for each
  clause~$\clc \in \negconf{\clsc}$ there is a clause~$\clc' \in
  \clsd$ such that~$\clc' \subseteq \clc$ and
  vice versa. 
  The proposition then
  follows because 
  by definition
  neither~$\clsd$
  nor~$\negconf{\clsc}$ contains subsumed clauses.

  First, let~$\clc \in \negconf{\clsc}$. By the definition
  of~$\negconf{\clsc}$ we know that for every clause~$\cld \in \clsc$
  the clause~$\clc$ contains the negation of some literal
  from~$\cld$. Hence,~$\lnot \clc$ implies~$\clsc$ as it is a
  conjunction of literals from each clause in~$\clsc$. By taking the
  minimal clause~$\clc' \subseteq \clc$ such that~$\lnot \clc' \impl
  \clsc$ we have that~$\clc' \in \clsd$.

  In the opposite direction, we want to show for any~$\clc \in \clsd$ 
  that~$\clc$ must contain a negation of some literal in~$\cld$ for
  every clause~$\cld \in \clsc$. Assume for the sake of contradiction
  that~$\cld \in \clsc$ is a clause such that none of its literals has
  a negation
  appearing
  in~$\clc$. Let~$\tvastd$ be a total 
  truth value
  assignment that
  satisfies~$\lnot \clc$
  (such an assignment exists because~$\clc$ is non-trivial). 
  By assumption, flipping the variables
  in~$\tvastd$ so that they falsify~$\cld$ cannot falsify~$\lnot
  \clc$. Therefore, we can find an assignment that satisfies~$\lnot
  \clc$ but falsifies~$\cld \in \clsc$, which 
  contradicts
  the definition of~$\clsd$.  Hence, 
  $\clc$ must contain a
  negation of some literal in $\cld$ for every $\cld \in \clsc$ and by
  the definition of $\negconf{\clsc}$ there is a~$\clc' \in
  \negconf{\clsc}$ such that $\clc' \subseteq \clc$.
\end{proof}

The following observation, which formalizes the main idea behind the
concept of \negativeconfs, 
is an immediate consequence of
\refpr{pr:AlternativeDefOfMirror}.


\begin{observation}\label{obs:MirrorIsNegationOfConfiguration}
  An assignment satisfies a 
  clause
  configuration~$\clsc$ if and
  only if it falsifies the
  negated clause configuration~%
  $\negconf{\clsc}$. That
  is,~$\clsc$ is logically equivalent to $\lnot \negconf{\clsc}$.
\end{observation}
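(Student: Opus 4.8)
The plan is to read off the equivalence $\clsc \equiv \lnot \negconf{\clsc}$ directly from the minimal-clause characterization in \refpr{pr:AlternativeDefOfMirror}, proving the two directions of the stated biconditional separately. Throughout, recall that the configuration $\negconf{\clsc}$ is identified with the conjunction of its clauses, so that an assignment \emph{falsifies} $\negconf{\clsc}$ precisely when it fails to satisfy at least one clause of $\negconf{\clsc}$. By \refpr{pr:AlternativeDefOfMirror}, every clause $\clc \in \negconf{\clsc}$ satisfies $\lnot \clc \impl \clsc$, and these are exactly the subset-minimal such clauses; both directions will exploit this.

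For the first direction I would argue the contrapositive: if a total assignment $\tvastd$ falsifies $\clsc$, then $\tvastd$ satisfies $\negconf{\clsc}$. Fix any $\clc \in \negconf{\clsc}$. Since $\lnot \clc \impl \clsc$ while $\tvastd \nimpl \clsc$, the assignment $\tvastd$ cannot satisfy the term $\lnot \clc$, which means $\tvastd$ satisfies the clause $\clc$. As $\clc$ was arbitrary, $\tvastd$ satisfies every clause of $\negconf{\clsc}$, i.e.\ $\tvastd \impl \negconf{\clsc}$. Contrapositively, any assignment that falsifies $\negconf{\clsc}$ must satisfy $\clsc$.

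For the converse I would show that if $\tvastd \impl \clsc$ then $\tvastd$ falsifies some clause of $\negconf{\clsc}$. Let $\clc_\tvastd$ be the clause all of whose literals are falsified by $\tvastd$, that is, the negation of the complete term describing $\tvastd$. Then $\lnot \clc_\tvastd$ is satisfied only by $\tvastd$, so $\lnot \clc_\tvastd \impl \clsc$ holds trivially because $\tvastd \impl \clsc$. By \refpr{pr:AlternativeDefOfMirror}---specifically, the fact established in its proof that every clause whose negation implies $\clsc$ contains a minimal such clause---there is some $\clc \in \negconf{\clsc}$ with $\clc \subseteq \clc_\tvastd$. Every literal of $\clc$ is then a literal of $\clc_\tvastd$ and hence falsified by $\tvastd$, so $\tvastd$ falsifies $\clc$ and therefore falsifies $\negconf{\clsc}$. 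Combining the two directions yields that $\tvastd$ satisfies $\clsc$ if and only if $\tvastd$ falsifies $\negconf{\clsc}$, which is exactly $\clsc \equiv \lnot \negconf{\clsc}$.

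The only step that is not a one-line deduction from the proposition is producing the witnessing minimal clause in the converse direction, and I expect this to be the main (if minor) obstacle: it is the single place where one must invoke the \emph{existence} of a minimal subclause guaranteed by \refpr{pr:AlternativeDefOfMirror}, rather than merely the defining property of the clauses in $\negconf{\clsc}$. As a sanity check I would confirm that the degenerate case $\clsc = \emptyset$, where $\negconf{\emptyset} = \set{\emptycl}$, and the convention that trivial and subsumed clauses are discarded cause no difficulty; both are immediate.
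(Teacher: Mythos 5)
Your proof is correct and follows exactly the route the paper intends: the paper states \refobs{obs:MirrorIsNegationOfConfiguration} as an immediate consequence of \refpr{pr:AlternativeDefOfMirror} without further argument, and your two directions (using $\lnot\clc \impl \clsc$ for each $\clc \in \negconf{\clsc}$ in one direction, and extracting a minimal subclause of the full falsified clause $\clc_\tvastd$ in the other) simply spell out that deduction. No gaps; the only point worth noting is the one you already flag, namely that the converse direction needs the existence of a minimal subclause, which the proposition indeed supplies.
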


%


Recall that what we want to do is to take a 
resolution 
refutation
$\proofstd = (\clsc_0, \clsc_1, \ldots, \clsc_\stoptime)$
and argue that if $\proofstd$ has small space
complexity, then
the reversed sequence of \negativeconfs 
$\proofstd' = 
(\negconf{\clsc_\stoptime}, \negconf{\clsc_{\stoptime-1}},
\ldots, 
\negconf{\clsc_0})$
has small width
complexity. 
However, as noted above 
$\proofstd'$ is not necessarily a legal resolution refutation.
Hence, we need to show how to derive the clauses 
in each configuration of the negated refutation
without increasing the width by too much.
%
%
We do so by a case analysis over the derivation steps in the original
refutation, \ie axiom download, clause inference, and clause erasure. 
The following lemma shows that for
inference and erasure steps
all that is needed in the reverse direction is to apply weakening.

\begin{lemma}\label{lem:ADandInferenceByWeakening}
  If 
  $\clsc \impl \clsc'$, 
  then for every clause~$\clc \in \negconf{\clsc}$
  there is a clause~$\clc' \in \negconf{\clsc'}$ such that~$\clc$ is a
  weakening of~$\clc'$.
\end{lemma}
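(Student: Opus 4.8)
The plan is to push everything through the semantic characterization in \refpr{pr:AlternativeDefOfMirror}, which turns the lemma into a short argument combining transitivity of logical implication with a single minimization step. The key reformulation is that ``$\clc$ is a weakening of $\clc'$'' means exactly the inclusion $\clc' \subseteq \clc$ of clauses-viewed-as-literal-sets, so all I need is to produce, for each $\clc \in \negconf{\clsc}$, some $\clc' \in \negconf{\clsc'}$ contained in $\clc$.

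First I would fix $\clc \in \negconf{\clsc}$. By \refpr{pr:AlternativeDefOfMirror}, $\clc$ is a minimal non-trivial clause with $\lnot \clc \impl \clsc$; in particular $\lnot \clc \impl \clsc$, and chaining this with the hypothesis $\clsc \impl \clsc'$ by transitivity of $\impl$ yields $\lnot \clc \impl \clsc'$. Thus $\clc$ already satisfies the implication required for membership in $\negconf{\clsc'}$, and the only property that might fail is minimality. To repair this I would consider the family of subclauses $\set{\clc'' : \clc'' \subseteq \clc \text{ and } \lnot \clc'' \impl \clsc'}$, which is finite and nonempty since $\clc$ itself lies in it, and select a member $\clc'$ that is minimal under inclusion.

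It remains to verify that $\clc'$ is genuinely an element of $\negconf{\clsc'}$, and this is where the one real subtlety lies: I chose $\clc'$ minimal only among subclauses of $\clc$, whereas \refpr{pr:AlternativeDefOfMirror} demands minimality among \emph{all} clauses. The observation that closes this gap is that every proper subclause of $\clc'$ is in particular a subclause of $\clc$, so minimality inside $\clc$ already forces $\lnot \clc''' \nimpl \clsc'$ for every proper subclause $\clc''' \subsetneq \clc'$; hence $\clc'$ is globally minimal. Combined with non-triviality of $\clc'$ (inherited from the non-trivial $\clc$), \refpr{pr:AlternativeDefOfMirror} gives $\clc' \in \negconf{\clsc'}$, and since $\clc' \subseteq \clc$ the clause $\clc$ is a weakening of $\clc'$, as desired. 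I expect no width or case-by-case complications in this lemma, since it is phrased purely at the level of a single implication $\clsc \impl \clsc'$; matching this abstract statement against the concrete axiom-download, inference, and erasure steps of an actual refutation is deferred to the surrounding argument.
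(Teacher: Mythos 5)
Your proof is correct and follows essentially the same route as the paper's: both pass through \refpr{pr:AlternativeDefOfMirror}, use transitivity of $\impl$ to obtain $\lnot\clc \impl \clsc'$, and then extract a minimal subclause $\clc' \subseteq \clc$ that lies in $\negconf{\clsc'}$. The only difference is that you spell out the minimization step (and the local-versus-global minimality point, which is indeed a non-issue since minimality under inclusion is tested only against proper subclauses), whereas the paper compresses this into ``applying the proposition again.''
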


%
%
\begin{proof}
  For any clause $\clc$ in~$\negconf{\clsc}$ it holds by
  \refpr{pr:AlternativeDefOfMirror} that 
  $\lnot \clc \impl \clsc$.
  Since
  $\clsc \impl \clsc'$, 
  this in turns implies that
  $\lnot \clc \impl \clsc'$. 
  Applying
  \refpr{pr:AlternativeDefOfMirror} 
  again,
  we conclude that there exists a clause
  $\clc' \subseteq \clc$ 
 such that 
 $\clc' \in \negconf{\clsc'}$.
\end{proof}

The only time in a refutation
$\proofstd =(\clsc_0, \clsc_1, \ldots, \clsc_\stoptime)$
when it does not hold that
$\clsc_{t-1} \impl \clsc_{t}$
is when an axiom clause is downloaded at time~$t$, 
and such derivation steps will require a bit more careful analysis.
We provide such an analysis in the full proof of
\refth{th:spaceToWidth}, 
which we are now ready to present.

\begin{proof}[Proof of \refth{th:spaceToWidth}]
  Let
  $\proofstd = (\clsc_0, \clsc_1, \ldots, \clsc_\stoptime)$
  be a  resolution refutation of~$F$ in space~$\sspace$. 
  For every configuration 
  $\clsc_\timet \in \proofstd$, 
  let $\clsd_\timet$ denote the corresponding
  \negativeconf{} $\negconf{\clsc_{\timet}}$.
  \ifthenelse{\boolean{conferenceversion}}{%
    By 
    the discussion preceding
    \refdef{def:NegativeConfiguration}, it is clear than each clause
    of $\clsc_{\timet}$ contributes at most one literal to each clause
    of $\clsd_{\timet}$. Hence, the clauses of $\clsd_\timet$ have
    width at most $s$.
  }{%
    By assumption each $\clsc_{\timet}$ has at most $s$ clauses, 
    and
    thus
    \refobs{obs:WidthSpaceInMirror} guarantees that $\clsd_\timet$ has
    width at most $s$.}
  We need to show how to transform the sequence
  $\proofstd' = (\clsd_{\stoptime},
    \clsd_{\stoptime - 1}, \ldots, \clsd_0)$ 
  into a 
  legal 
  resolution
  refutation of width at most~$\sspace + \clwidth - 3$.

  The initial configuration of the new refutation is $\clsd_\stoptime$
  itself, which is empty by \refdef{def:NegativeConfiguration}.
  If $\clsc_{\timet+1}$ follows $\clsc_{\timet}$ by 
  inference
  or 
  erasure, then we can derive any clause of~$\clsd_\timet$
  from a clause of~$\clsd_{\timet + 1}$ by weakening, as proven in
  \reflem{lem:ADandInferenceByWeakening}.
  If $\clsc_{\timet+1}$ follows $\clsc_{\timet}$ by axiom download,
  then we can derive~$\clsd_{\timet}$ from~$\clsd_{\timet + 1}$ in
  width at most~$\sspace + \clwidth - 3$, as we show below.
  The last configuration $\clsd_{0}$ includes the empty clause $\bot$
  by \refdef{def:NegativeConfiguration}, so the new refutation is
  complete.
  
  It remains to 
  take care of
  the case of axiom download.  We claim that
  we can assume without loss of generality that
  prior to each axiom download step the space of the
  configuration~$\clsc_\timet$ is at most~$\sspace - 2$. Otherwise,
  immediately after the axiom download step the proof~$\proofstd$
  needs to erase a clause in order to maintain the space
  bound~$\sspace$. By reordering the axiom download and clause erasure
  steps we get a valid refutation of~$F$ for which it holds
  that~$\clspaceof{\clsc_\timet} \leq \sspace - 2$.

  Suppose
  $\clsc_{\timet + 1} = \clsc_\timet \union \set{\cla}$ for some
  axiom~$\cla=\lita_1 \lor \formuladots \lor
  \lita_{\widthl}$, with~$\widthl \leq \clwidth$.
  Consider now some clause~$\clc$ that is in the
  \negativeconf{}~$\clsd_\timet$ and that does not belong to~$\clsd_{\timet + 1}$.
  \ifthenelse{\boolean{conferenceversion}}{%
    Again by \refdef{def:NegativeConfiguration}, 
    the clause $ \clc $ has
    at most one literal per clause in $\clsc_\timet$, so
    $\widthofarg{\clc} \leq \sspace - 2$.
  }{%
    By \refobs{obs:WidthSpaceInMirror} $\widthofarg{\clc} \leq
    \clspaceof{\clsc_\timet} \leq \sspace - 2$.
  }%
  To derive $\clc$ from $\clsd_{\timet + 1}$ we first download axiom
  $A$ and then show how to derive~$\clc$ from the clauses
  in~$\clsd_{\timet + 1} \union \set{\cla}$.

  First, note that all clauses~$\clc_{\lita} = \clc \lor
  \olnot{\lita}$ are either contained in or are weakenings of clauses
  in~$\clsd_{\timet + 1}$. This follows easily from
  \refdef{def:NegativeConfiguration} as adding an axiom~$\cla$ to the
  configuration~$\clsc_\timet$ results in adding negations of literals
  from~$\cla$ to all clauses~$\clc \in \clsd_\timet$.  Hence, we can 
  obtain
  $\clc$ by the following derivation:
  \begin{prooftree}
    \AxiomC{$\cla = \lita_1 \lor \formuladots \lor \lita_{\widthl}$}
    \AxiomC{$\clc_{\lita_1} = \clc \lor \olnot{\lita}_1$}
    \BinaryInfC{$\clc \lor \lita_2 \lor \formuladots \lor \lita_{\widthl}$}
    \AxiomC{$\clc_{\lita_2} = \clc \lor \olnot{\lita}_2$}
    \BinaryInfC{$\clc \lor \lita_3 \lor \formuladots \lor \lita_{\widthl}$}
    \noLine
    \UnaryInfC{$\vdots$}
    \noLine
    \UnaryInfC{$\clc \lor \lita_{\widthl}$}
    \AxiomC{$\clc_{\lita_{\widthl}} = \clc \lor \olnot{\lita}_{\widthl}$}
    \BinaryInfC{$\clc$}
  \end{prooftree}
  When~$\clc$ is the empty clause, the width of this
  derivation is upper-bounded by~$\widthofarg{\cla} \leq
  \clwidth$. Otherwise, it is upper bounded by~$\widthofarg{\clc} +
  \widthofarg{\cla} - 1 \leq \sspace + \clwidth - 3$. Any resolution
  refutation has space at least~$3$ (unless the formula contains the
  empty clause itself), so the width of~$\proofstd'$ is upper-bounded
  by~$\widthofarg{\proofstd'} \leq \sspace + \clwidth - 3$.
\end{proof}

The proof of \refth{th:spaceToWidth} also works for $\tmwidth$-DNF
resolution, with some loss in parameters. 
Let us state this as a theorem and sketch the proof. 

\begin{theorem}
  Let $F$ be a \kcnfform{} and 
  $\refof{\proofstd}{F}$ be an
  $\tmwidth$-DNF resolution refutation of $F$ in space
  $\clspaceof{\proofstd} \leq \sspace$. 
  Then there
  exists
  a resolution refutation $\proofstd'$ of $F$ in width
  at most 
  \mbox{$\widthofarg{\proofstd'} \leq (\sspace - 2) \tmwidth + \clwidth -
  1$}.
\end{theorem}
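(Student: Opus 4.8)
The plan is to run the argument for \refth{th:spaceToWidth} essentially unchanged, the only real work being to lift the three supporting facts --- \refdef{def:NegativeConfiguration}, \refobs{obs:WidthSpaceInMirror}, and \reflem{lem:ADandInferenceByWeakening} --- from configurations of clauses to configurations of $\tmwidth$-DNF formulas. Given a configuration $\clsc = \set{D_1, \ldots, D_\nclause}$ of $\tmwidth$-DNF formulas, I would define its \negativeconf{} by distributing the negation: writing $D_i = \Lor_j \termt_{ij}$, each $\lnot D_i$ is the $\tmwidth$-CNF $\Land_j \olnot{\termt}_{ij}$, and $\negconf{\clsc}$ consists of all clauses $\olnot{\termt}_{1 j_1} \lor \formuladots \lor \olnot{\termt}_{\nclause j_\nclause}$ obtained by picking one term from each $D_i$ and negating it (again discarding trivial and subsumed clauses). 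Since negating a $\tmwidth$-term yields a $\tmwidth$-clause and each of the at most $\sspace$ formulas contributes one such block, every clause of $\negconf{\clsc}$ has width at most $\setsize{\clsc} \cdot \tmwidth \leq \sspace \tmwidth$, the analogue of \refobs{obs:WidthSpaceInMirror}, and $\negconf{\clsc}$ is still logically equivalent to $\lnot \clsc$, so reversing the negated configurations remains sound.

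One point to flag up front is that the clean characterization in \refpr{pr:AlternativeDefOfMirror} --- that $\negconf{\clsc}$ is the set of minimal clauses whose negation implies $\clsc$ --- must be abandoned as the basis for the width bound. A single $\tmwidth$-DNF can have prime implicates of its negation of width roughly $2\tmwidth$ (for instance $D = (x \land \olnot{a}_1 \land \olnot{a}_2) \lor (\olnot{x} \land \olnot{b}_1 \land \olnot{b}_2)$ forces the minimal clause $a_1 \lor a_2 \lor b_1 \lor b_2$), so working with minimal clauses would not keep the width linear in $\tmwidth$ with coefficient $\setsize{\clsc}$. I therefore keep the explicit distributive definition above, for which the width bound holds by construction, and re-prove \reflem{lem:ADandInferenceByWeakening} directly from that definition.

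The heart of the proof is re-establishing \reflem{lem:ADandInferenceByWeakening} for every inference and erasure step of $\resknot[\tmwidth]$, i.e.\ showing that whenever $\clsc_\timet \impl \clsc_{\timet+1}$ each clause of $\negconf{\clsc_\timet}$ is a weakening of a clause of $\negconf{\clsc_{\timet+1}}$. Erasure is immediate, since dropping a formula only drops the corresponding blocks from the clauses. For inference one argues by cases on the rule, using the key observation that the terms of the derived formula $\cld$ are, apart from the cut part, already terms of the premises sitting in $\clsc_\timet$. Concretely, for weakening, $\land$-introduction, and $\land$-elimination every term of $\cld$ either comes from a side formula $G$ (or $H$) of a premise or is assembled from the premise terms $\termt, \termt'$; in each case the extra block $\olnot{U}$ that $\cld$ contributes to a clause of $\negconf{\clsc_{\timet+1}}$ is already contained in the block contributed by the matching premise term, so the clause of $\negconf{\clsc_\timet}$ reappears, up to subsumption, in $\negconf{\clsc_{\timet+1}}$. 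The one genuinely new situation is the $\tmwidth$-cut rule: here the dangerous combinations, in which a clause of $\negconf{\clsc_\timet}$ selects the conjunctive term $\lita_1 \land \formuladots \land \lita_{\tmwidth'}$ from the first premise and a single negated literal $\olnot{\lita}_p$ from the second, force both $\lita_p$ and $\olnot{\lita}_p$ into the clause and hence make it trivial, so it is discarded and never needs to be derived. I expect this case analysis --- and in particular verifying that $\tmwidth$-cut yields only trivial clauses in the problematic combinations --- to be the main obstacle.

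With the weakening lemma in hand the rest is identical to \refth{th:spaceToWidth}. Axiom download is unchanged because an axiom $\cla = \lita_1 \lor \formuladots \lor \lita_\widthl$ is still an ordinary clause: adding it to $\clsc_\timet$ appends the single-literal blocks $\olnot{\lita}_p$ to the clauses of $\negconf{\clsc_\timet}$, and one recovers a clause $\clc \in \negconf{\clsc_\timet}$ from the clauses $\clc \lor \olnot{\lita}_p \in \negconf{\clsc_{\timet+1}}$ by downloading $\cla$ and resolving through the chain exactly as before. Using the same reordering trick to assume $\clspaceof{\clsc_\timet} \leq \sspace - 2$ before each download, the reconstructed clause has width at most $(\sspace - 2)\tmwidth$, and the resolution chain against $\cla$ raises this by at most $\widthofarg{\cla} - 1 \leq \clwidth - 1$, giving width $(\sspace - 2)\tmwidth + \clwidth - 1$; the negated configurations produced during the weakening steps have width at most $\sspace \tmwidth$, so taking the maximum over all steps the width of $\proofstd'$ is bounded by $(\sspace - 2)\tmwidth + \clwidth - 1$, matching the accounting in \refth{th:spaceToWidth}.
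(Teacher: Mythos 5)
Your proposal is correct and follows the same overall strategy as the paper's proof sketch: define the \negativeconf{} of an $\tmwidth$-DNF configuration by selecting one term per formula and negating it, observe that the resulting clauses have width at most $\sspace\tmwidth$, reverse the sequence, and handle axiom downloads exactly as in \refth{th:spaceToWidth} since axioms are still ordinary clauses. Where you genuinely diverge is in how the analogue of \reflem{lem:ADandInferenceByWeakening} is established. The paper asserts that an analogue of \refpr{pr:AlternativeDefOfMirror} holds for this definition and that the weakening lemma ``easily follows''; you instead observe---correctly---that the literal semantic characterization fails for the distributive definition (your prime-implicate example shows that a minimal clause whose negation implies a single $\tmwidth$-DNF need not arise from one negated term, and one can likewise cook up configurations with $\clsc \impl \clsc'$ for which the subsumption property fails outright, even though no single inference step realizes such a pair). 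You therefore re-prove the lemma by a direct case analysis on the four inference rules, with the $\tmwidth$-cut rule as the only delicate case, where the problematic term selections produce trivial clauses that are discarded. This syntactic, rule-by-rule argument is precisely what is needed to make the paper's ``easily follows'' rigorous, so your version is if anything more careful than the published sketch. One small point to tighten in your final accounting: you bound the skeleton clauses by $\sspace\tmwidth$ and then claim the overall width is $(\sspace-2)\tmwidth+\clwidth-1$, but $\sspace\tmwidth$ exceeds that quantity whenever $\clwidth < 2\tmwidth+1$. To obtain the stated bound you should collapse chains of weakenings so that the only clauses actually materialized in $\proofstd'$ are those consumed at download steps (of width at most $(\sspace-2)\tmwidth+1$), the download chains themselves, and the empty clause; this is a bookkeeping fix rather than a gap in the idea, and the same issue is glossed over in the paper's own sketch.
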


\begin{proof}[Proof sketch]
  We define the \negativeconf{}
  $\negconf[{\resknot[\tmwidth]}]{\clsc}$ of an 
  $\resknot[\tmwidth]$\nobreakdash-configuration to be
  \begin{itemize}
  \item $\negconf[{\resknot[\tmwidth]}]{\emptyset} = \set{\emptycl}$,
  \item $\negconf[{\resknot[\tmwidth]}]{\clsc \union \set{\clc}} =
    \setdescr{\cld \lor \olnot{\termt}}{\cld \in
      \negconf[{\resknot[\tmwidth]}]{\clsc} \text{\ and\ } \termt \in \clc}$,
  \end{itemize}
  with trivial and subsumed clauses removed. It is easy to see that 
  $\tmwidth$-DNF configuration 
  of space~$\sspace$ 
  gets transformed into a
  resolution configuration of width
  at most 
  $\sspace \tmwidth$. We can prove
  an analogue of \refpr{pr:AlternativeDefOfMirror} for this definition
  of the \negativeconf   
  from which the analogue of
  \reflem{lem:ADandInferenceByWeakening} 
  easily follows. The case of
  axiom download is the same as in the proof of
  \refth{th:spaceToWidth} as axioms are clauses. Hence, running the
  \negativeref backwards we get a resolution refutation of $F$ in
  width 
  at most 
  $(\sspace - 2) \tmwidth + \clwidth - 1$.
\end{proof}

\section{A Static Technique for Proving Space Lower Bounds}
\label{sec:space-lower-bound}


Looking at the proof complexity literature,
the techniques used to prove lower bounds for resolution length and
width (e.g.,
\cite{BW01ShortProofs,CS88ManyHard,H85Intractability,U87HardExamples})
differ significantly 
from those used to prove resolution space lower bounds 
(e.g.,
\cite{ABRW02SpaceComplexity,BG03SpaceComplexity,ET01SpaceBounds})
in that the former are \emph{static} or \emph{oblivious} while the
latter are \emph{dynamic}.

Lower bounds
on
resolution length
typically
have the following general
structure: if a refutation is too short, then we obtain a
contradiction by applying a suitable random restriction (the length of
the proof figures in by way of a union bound); so any refutation must
be long.
When proving lower bounds on resolution width, one defines a complexity
measure
and uses the properties of this measure to show that every
refutation must contain a complex clause; in a second step one then
argues that such a complex clause must be wide.

In contrast, most lower bound proofs for resolution space use an
\emph{adversary  argument}.
Assuming that the resolution derivation 
has small space,
one constructs a satisfying assignment
for each clause configuration.
Such assignments are updated inductively as the derivation progresses,
and one shows that the update is always possible given the assumption
that the space is small. This in turn shows that the contradictory
empty clause can never be reached, implying a space lower bound on
refutations.
The essential feature separating this kind of proofs from
the ones above is that the satisfying assignments arising during the
proof
\emph{depend on the history of the derivation};
in contrast, the
complexity measures in width lower bounds are defined once and for
all, as are the distributions of random restrictions in length lower
bounds.

In this section we present a \emph{static} lower bound on resolution
space. Our proof combines the ideas of \refsec{sec:spacetowidth} and
the complexity measure for clauses used in~\cite{BW01ShortProofs}. We
define a complexity measure for configurations which can be used to
prove space lower bounds along the lines of the width lower bounds
mentioned above.

This approach works in
general in that any
complexity measure for clauses
can be transformed into a complexity measure for
configurations. This
turns many width lower bound techniques into space lower bound ones
(e.g., width lower bounds for random $3$-CNF formulas.)
In this section we give a concrete example
of this for
Tseitin formulas, which
are a family of CNFs encoding a specific type of systems of linear
equations%
\ifthenelse{\boolean{conferenceversion}}{.}{%
; see \reffig{fig:tseitin} for illustration.

\begin{figure}[tp]
  \subfigure
  [Labelled triangle graph.]	    
  {
    \label{fig:tseitin-graph}
    \begin{minipage}[b]{0.40\linewidth}
      \centering
      \includegraphics{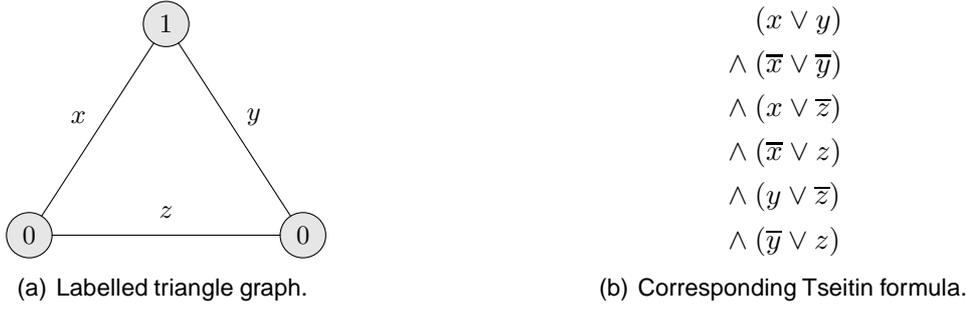}%
    \end{minipage}
  }
  \hfill
  \subfigure
  [Corresponding Tseitin formula.]	    
  {     
    \label{fig:tseitin-formula}
    \begin{minipage}[b]{0.55\linewidth}
      \centering
      \begin{gather*}
	\begin{aligned}
	  & 
	  (x \lor y)
	  \\
	  \land \ 
	  &
	  (\olnot{x} \lor \olnot{y})
	  \\
	  \land \ 
	  &
	  (x \lor \olnot{z})
	  \\
	  \land \ 
	  &
	  (\olnot{x} \lor z)
	  \\
	  \land \ 
	  &
	  (y \lor \olnot{z})
	  \\
	  \land \ 
	  &
	  (\olnot{y} \lor z)
	\end{aligned}
      \end{gather*}
    \end{minipage}
  }
  \caption{Example Tseitin formula.}
  \label{fig:tseitin}
\end{figure}

}

\begin{definition}[Tseitin formula]
  \label{def:tseitinmod2}
  Let $\graphg = (\vertexsetv, \edgesete)$ be an undirected graph and
  $\funcdescr{\tseitincharge}{\vertexsetv}{\set{0, 1}}$ be a
  function. Identify every edge $\edgee \in \edgesete$ with a variable
  $\varx_\edgee$, and let $\tseitinparity{\vertexv}{\tseitincharge}$
  denote the canonical CNF encoding of the constraint
  $\sum_{\edgee \ni \vertexv} \varx_\edgee = \tseitincharge(\vertexv) \pmod{2}$
  for any vertex $\vertexv \in \vertexsetv$. Then the
  \introduceterm{Tseitin formula} over~$\graphg$ with respect
  to~$\tseitincharge$ is $\tseitinnot{\graphg}{\tseitincharge} =
  \Land_{\vertexv \in \vertexsetv}
  \tseitinparity{\vertexv}{\tseitincharge}$.
\end{definition}

When the degree of $\graphg$ is bounded by~$\graphdeg$,
$\tseitinparity{\vertexv}{\tseitincharge}$ has at most
\mbox{$2^{\graphdeg - 1}$ clauses,} all of width at most~$\graphdeg$,
and hence $\tseitinnot{\graphg}{\tseitincharge}$ is a
\mbox{$\graphdeg$-CNF} formula with at most $2^{\graphdeg - 1}
\setsize{\vertexsetv}$ clauses. We say that a set of
vertices~$\vertexsetu$ has \introduceterm{odd (even) charge} if
$\sum_{\vertexu \in \vertexsetu} \tseitincharge(\vertexu)$ is odd
(even).
A simple parity argument shows that when $\vertices{\graphg}$ has
odd charge, $\tseitinnot{\graphg}{\tseitincharge}$ is
unsatisfiable. On the other hand, if $\graphg$ is connected then for
each $\vertexv \in \vertexsetv$ it is always possible to satisfy the
constraints $\tseitinparity{\vertexu}{\tseitincharge}$ for all
$\vertexu \neq \vertexv$.

The hardness of Tseitin formulas are governed by the expansion
properties of the underlying graph.

\begin{definition}[Edge expander]
\label{def:edge-expansion}
  The graph $\graphg = (\vertexsetv, \edgesete)$ is an
  \introduceterm{$\expparams$-edge expander} if for every set of
  vertices $\vertexsetu \subseteq \vertexsetv$ such that
  $\setsize{\vertexsetu} \leq \expansionsize$
  it holds that
  $\setsize{\cutedges{\vertexsetu}} \geq
  \expansionfactor \setsize{\vertexsetu}$, 
  where $\cutedges{\vertexsetu}$ is the set of
  edges of $\graphg$ with exactly one vertex in $\vertexsetu$.
\end{definition}

We next present a new technique to show that if a graph
$\graphg$ is a good edge expander, then large space is needed to refute
$\tseitinnot{\graphg}{\tseitincharge}$
in resolution.
We remark that this
was originally proven in
\cite{ABRW02SpaceComplexity,ET01SpaceBounds}
(and with slightly better parameters, as discussed below).

\begin{theorem}\label{th:TseitinSpaceLB}
  For a $\graphdeg$-regular $\expparams$-edge expander $\graphg$ it
  holds that $\clspaceof{\tseitinnot{\graphg}{\tseitincharge}} \geq
  \expansionfactor \expansionsize / \graphdeg$.
\end{theorem}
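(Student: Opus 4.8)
The plan is to transplant the Ben-Sasson--Wigderson width lower bound methodology of \cite{BW01ShortProofs} from individual clauses to whole configurations, using the negated-configuration correspondence of \refsec{sec:spacetowidth} as the bridge between the width of (negated) clauses and the clause space of configurations. Concretely, I would start from the standard clause complexity measure for Tseitin formulas: for a clause $\clc$ let $\mu(\clc)$ be the minimum size of a vertex set $\vertexsetu \subseteq \vertexsetv$ such that $\Land_{\vertexv \in \vertexsetu}\tseitinparity{\vertexv}{\tseitincharge} \impl \clc$. This measure has the three properties that drive the classical argument: every axiom has $\mu \le 1$, $\mu$ is subadditive under the resolution rule, and $\mu(\emptycl) = \setsize{\vertexsetv}$, the last equality using that an edge expander is connected, so that the only charge-odd vertex set with empty edge boundary is $\vertexsetv$ itself.

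The geometric heart of the argument is the implication ``medium measure $\Rightarrow$ wide'', which I would prove by the usual flipping/expansion argument. If $\vertexsetu$ is a \emph{minimal} witness for $\mu(\clc) = \setsize{\vertexsetu} = m \le \expansionsize$, then for every boundary edge $\edgee \in \cutedges{\vertexsetu}$ the clause $\clc$ must mention $\edgee$: taking an assignment that satisfies all constraints of $\vertexsetu$ except the one at the $\vertexsetu$-endpoint of $\edgee$ and falsifies $\clc$, flipping $\varx_\edgee$ repairs that single constraint without disturbing the others and hence satisfies $\clc$, so $\clc$ depends on $\varx_\edgee$. Thus the width of $\clc$ is at least $\setsize{\cutedges{\vertexsetu}} \ge \expansionfactor m$ by \refdef{def:edge-expansion}. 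To lift this to space, I would pass to the negated configurations of \refsec{sec:spacetowidth}: since $\clsc$ is equivalent to $\lnot\negconf{\clsc}$ by \refobs{obs:MirrorIsNegationOfConfiguration}, and since every clause of $\negconf{\clsc}$ has width at most $\clspaceof{\clsc}$ by \refobs{obs:WidthSpaceInMirror}, a wide clause forced to live in $\negconf{\clsc}$ certifies a space lower bound. Using \refpr{pr:AlternativeDefOfMirror} and \reflem{lem:ADandInferenceByWeakening} one checks that the induced configuration measure changes by at most one unit per axiom download and only decreases (via weakening of the negated clauses) under inference and erasure; since it runs from a small value at $\clsc_0 = \emptyset$ to $\setsize{\vertexsetv}$ at the end, it must attain an intermediate value of my choosing, say $m \approx \expansionsize$.

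Fixing a configuration $\clsc$ of that intermediate measure, I would then argue that its $m$ essential expander vertices cannot all be accounted for by too few clauses: each of the $\setsize{\cutedges{\vertexsetu}} \ge \expansionfactor\expansionsize$ boundary edges must be charged to a clause of $\clsc$ responsible for the corresponding constraint, and by $\graphdeg$-regularity a single clause can absorb the boundary edges of at most $\graphdeg$ such vertices, which would yield $\clspaceof{\clsc} \ge \expansionfactor\expansionsize/\graphdeg$ and hence the theorem.

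The step I expect to be the main obstacle is precisely this last conversion from ``medium configuration measure'' to ``many clauses'', together with pinning down the $1/\graphdeg$ factor. The difficulty is that a single very wide clause can by itself have large clause measure, so the configuration measure must be set up carefully, through the negated configuration rather than naively as the minimum number of vertices implying the configuration, to guarantee that high complexity genuinely forces many clauses into memory rather than one fat clause. Getting the charging scheme (each clause paying for at most $\graphdeg$ boundary vertices) to be compatible with the slow-growth bookkeeping of the measure, and confirming that the measure really increases in unit steps so that the intermediate value is attained, is where the care is needed; the remaining ingredients are the routine expansion estimate and the syntactic facts already established in \refsec{sec:spacetowidth}.
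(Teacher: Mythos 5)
Your overall architecture coincides with the paper's: a Ben-Sasson--Wigderson-style progress measure lifted from clauses to configurations via the negated configuration, an intermediate-value argument over the refutation, and an expansion/flipping argument showing that a configuration of intermediate measure must be supported by many clauses (at most one literal per clause in a minimal implying term). Your ``medium measure implies wide'' step is exactly the content of \reflem{lem:lemma-four-nine}, and the bridge from width of negated clauses to clause space is \refobs{obs:WidthSpaceInMirror}.

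The genuine gap is in the bookkeeping at axiom downloads, the step you yourself flag as delicate. Your claim that the configuration measure ``changes by at most one unit per axiom download'' is false, and the paper proves that it is false. For the measure $\confmeasure{\clsc} = \MAXOFSET{\termmeasure{\termt}}{\termt \impl \clsc}$ (equivalently, the maximum of the clause measure over $\negconf{\clsc}$), downloading an axiom $\cla$ of width $\graphdeg$ can shrink the measure multiplicatively: the correct bound is $\confmeasure{\clsc} \leq \graphdeg \cdot \Confmeasure{\clsc \union \set{\cla}} + 1$ (\reflem{lem:ADMeasureLB}), obtained by combining, for each literal $\lita \in \cla$, a witness vertex set for $\Termmeasure{\termt^* \land \lita}$ together with the single vertex whose parity constraint contains $\cla$; and the $\graphdeg$-star example at the end of \refsec{sec:space-lower-bound} shows this factor-$\graphdeg$ drop actually occurs, so no additive unit-step bound is available. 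Consequently you cannot pin the intermediate measure at $m \approx \expansionsize$; you only get a configuration with $\expansionsize/\graphdeg \leq \confmeasure{\clsc} \leq \expansionsize$ (\refcor{cor:BoundsOnConfMeasure}). This also relocates the $1/\graphdeg$ in the final accounting: with $\setsize{\vertexsetv^*} = \confmeasure{\clsc} \geq \expansionsize/\graphdeg$, a minimal implying term must mention every edge of $\Cutedges{\vertexsetv^*}$ and hence has at least $\expansionfactor \cdot \confmeasure{\clsc} \geq \expansionfactor\expansionsize/\graphdeg$ literals, and the one-literal-per-clause observation gives the space bound directly---there is no further division by $\graphdeg$, and your proposed charging scheme (``a single clause absorbs the boundary edges of at most $\graphdeg$ vertices'') is neither needed nor, as stated, meaningful for the arbitrary-width clauses that may occur in a configuration. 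A smaller slip: the measure runs in the opposite direction from what you state, starting at $\setsize{\vertexsetv}$ on $\clsc_0 = \emptyset$ (the empty term needs all parity axioms to be contradicted) and ending at $0$ when $\emptycl$ appears.
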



For the rest of this section we fix a particular $\graphdeg$-regular
connected graph 
$\graphg$
and
a function $\tseitincharge$ with
respect to which $\vertices{\graphg}$ has odd charge, and
consider
the corresponding Tseitin formula
$\tseitinnot{\graphg}{\tseitincharge}$.
The main tool used to prove \refth{th:TseitinSpaceLB} is a
complexity measure for configurations.
We show that if $\graphg$ is a good expander, then every refutation of
$\tseitinnot{\graphg}{\tseitincharge}$ must have a configuration with
intermediate measure.
We conclude the proof by showing that the space of a configuration is
at least 
its measure
if the latter falls within a specific range of
values.

We first define our \confcplxm for terms (i.e., configurations
consisting
of unit
clauses), and then 
extend it to general configurations.
In words, the
\termcplxm is the smallest number of parity axioms of
$\tseitinnot{\graphg}{\tseitincharge}$ that collectively
contradict the
term, and the
\confcplxm is the maximum measure
over all terms that imply the configuration.

\begin{definition}[\Confcplxm]
  The \introduceterm{\termcplxm} $\termmeasure{\termt}$ of a term
  $\termt$ is $ \termmeasure{\termt} =
  \MINOFSET{\setsize{\vertexsetv'}}{\vertexsetv' \subseteq \vertexsetv
    \text{\ and } \termt \land \Land_{\vertexv \in \vertexsetv'}
    \tseitinparity{\vertexv}{\tseitincharge} \impl \emptycl}$.

  The \introduceterm{\confcplxm} \confmeasure{\clsc} of a resolution
  configuration $\clsc$ is defined as $\confmeasure{\clsc} =
  \MAXOFSET{\termmeasure{\termt}}{\termt \impl \clsc}$. 
%
  When $\clsc$ is contradictory we have  $\confmeasure{\clsc} = 0$. 
\end{definition}

Note
that $\termmeasure{T}$ is a monotone decreasing
function, since
$\termt \subseteq \termt'$ implies $\termmeasure{\termt}
\geq \termmeasure{\termt'}$ by definition. Hence, we only need to look
at minimal terms~$\termt$ for which $\termt \impl \clsc$ in order to
determine~$\confmeasure{\clsc}$. These minimal terms are the
\emph{negations} of the clauses in~$\negconf{\clsc}$
(compare \refpr{pr:AlternativeDefOfMirror}). 
%
We now introduce the convenient concept of \introduceterm{witness} for
the measure.

\begin{definition}[Witness of measure]
  A \introduceterm{witness} of the measure~$\termmeasure{\termt}$ 
  of the term~$\termt$
  is a set of
  vertices~$\vertexsetv^*$ for which $\termmeasure{\termt} =
  \setsize{\vertexsetv^*}$ and $\termt \land \Land_{\vertexv \in
    \vertexsetv^*} \tseitinparity{\vertexv}{\tseitincharge} \impl
  \emptycl$. Similarly, for configurations~$\clsc$
  a witness
  for~$\confmeasure{\clsc}$ is a term~$\termt^*$ for which
  $\confmeasure{\clsc} = \termmeasure{\termt^*}$ and~$\termt^* \impl
  \clsc$.
\end{definition}


There is a big gap between the measure of the initial and final
configurations of a refutation, and we will see that the measure does
not change much at each step. Hence, the refutation must pass through
a configuration of intermediate measure.
Formally, if $G$ is connected then $\confmeasure{\emptyset} =
\setsize{\vertexsetv}$, because the empty term 
has measure $\setsize{\vertexsetv}$, and $\confmeasure{\clsc} = 0$
when~$\emptycl \in \clsc$.


To study how the measure changes during the refutation, we look
separately at what happens at each type of step. As in the proof of
\refth{th:spaceToWidth}, we can deal with inference and
clause erasure steps together,
whereas axiom downloads require more work.

\begin{lemma}
  If~$\clsc \impl \clsc'$ then~$\confmeasure{\clsc} \leq
  \confmeasure{\clsc'}$.
\end{lemma}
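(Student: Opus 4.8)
The plan is to exploit the transitivity of logical implication together with the fact that the configuration measure $\confmeasure{\clsc}$ is \emph{defined} as a maximum of the term measure $\termmeasure{\cdot}$ over a set of terms. The key observation is that enlarging a configuration in the sense of logical implication can only enlarge the family of terms over which this maximum is taken, so the measure can only grow.

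Concretely, I would invoke the witness concept introduced just above the lemma. Let $\termt^*$ be a witness for $\confmeasure{\clsc}$, so that $\termt^* \impl \clsc$ and $\confmeasure{\clsc} = \termmeasure{\termt^*}$. Since by hypothesis $\clsc \impl \clsc'$, transitivity of implication yields $\termt^* \impl \clsc'$. Thus $\termt^*$ belongs to the set $\setdescr{\termt}{\termt \impl \clsc'}$ over which $\confmeasure{\clsc'}$ is maximized, and therefore $\confmeasure{\clsc'} \geq \termmeasure{\termt^*} = \confmeasure{\clsc}$, which is exactly the claim. Phrased more structurally, the same reasoning establishes the inclusion $\setdescr{\termt}{\termt \impl \clsc} \subseteq \setdescr{\termt}{\termt \impl \clsc'}$, and taking the maximum of $\termmeasure{\cdot}$ over the smaller set on the left can only yield a smaller value.

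I do not expect any genuine obstacle here: this is a pure monotonicity property of the measure and follows directly from its definition, without touching the combinatorial structure of the Tseitin formula or the internals of the term measure $\termmeasure{\cdot}$ at all. The only point requiring a moment's care is the degenerate case where $\clsc$ is contradictory, so that $\confmeasure{\clsc} = 0$; but then the inequality holds trivially since the measure is nonnegative. Conceptually this lemma plays for the static space argument the same unifying role that \reflem{lem:ADandInferenceByWeakening} played in the proof of \refth{th:spaceToWidth}: the single hypothesis $\clsc \impl \clsc'$ lets us dispose of both inference and erasure steps at once, leaving only axiom download to be treated separately.
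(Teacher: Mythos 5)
Your proof is correct and is essentially identical to the paper's: both take a witness $\termt^*$ for $\confmeasure{\clsc}$, use transitivity of implication to get $\termt^* \impl \clsc'$, and conclude $\confmeasure{\clsc'} \geq \termmeasure{\termt^*} = \confmeasure{\clsc}$ from the definition of the configuration measure as a maximum. The extra remarks on the set inclusion and the contradictory case are harmless but not needed.
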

\begin{proof}
  Let~$\termt^*$ be a witness
  for~$\confmeasure{\clsc}$. Then,~$\termt^* \impl \clsc$ and, hence,
  we also have $\termt^* \impl \clsc'$. Therefore,
  \ifthenelse{\boolean{conferenceversion}}{%
    $\confmeasure{\clsc'} \geq \termmeasure{\termt^*} = \confmeasure{\clsc}$.
  }%
  {%
    $\confmeasure{\clsc'} \geq \termmeasure{\termt^*}$, because
    $\confmeasure{\clsc'}$ is equal to the maximum value of
    $\termmeasure{\termt}$ for terms $\termt$ implying $\clsc'$.
    As $\termmeasure{\termt^*}$ is equal to $\confmeasure{\clsc}$,
    the bound
    $\confmeasure{\clsc'} \geq \confmeasure{\clsc}$ follows.
  }%
\end{proof}

%
%

\begin{lemma}\label{lem:ADMeasureLB}
  For a clause $\cla$ in $\tseitinnot{\graphg}{\tseitincharge}$ and a
  graph $\graphg$ of bounded degree $\graphdeg$, if~$\clsc' = \clsc \union
  \set{\cla}$ then $\graphdeg \cdot \confmeasure{\clsc'} + 1 \geq
  \confmeasure{\clsc}$.
\end{lemma}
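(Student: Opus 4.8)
The plan is to bound $\confmeasure{\clsc}$ from above by taking a term that witnesses it and massaging it into terms that survive the download of the axiom $\cla$. Since $\cla$ is a clause of $\tseitinnot{\graphg}{\tseitincharge}$, it is one of the clauses of the parity constraint $\tseitinparity{\vertexv}{\tseitincharge}$ for some vertex $\vertexv$; consequently $\tseitinparity{\vertexv}{\tseitincharge} \impl \cla$ and $\setsize{\cla} \leq \graphdeg$ (a clause of $\tseitinparity{\vertexv}{\tseitincharge}$ mentions only edges incident to $\vertexv$). Let $\termt^*$ be a witness for $\confmeasure{\clsc}$, so that $\termmeasure{\termt^*} = \confmeasure{\clsc}$ and $\termt^* \impl \clsc$ (if $\confmeasure{\clsc} = 0$ the inequality is trivial). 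If $\termt^*$ already implies $\cla$, then $\termt^* \impl \clsc'$ and hence $\confmeasure{\clsc'} \geq \termmeasure{\termt^*} = \confmeasure{\clsc}$, which is even stronger than required.

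So the real work lies in the case $\termt^* \nimpl \cla$. First I would note that then no literal of $\cla$ can occur in $\termt^*$. Call a literal $\lita \in \cla$ \emph{free} if its negation $\olnot{\lita}$ does not occur in $\termt^*$, and for each free literal form the term $\termt_{\lita} = \termt^* \land \lita$. Each $\termt_{\lita}$ is non-trivial, still implies $\clsc$ (being more restrictive than $\termt^*$), and implies $\cla$ because it contains $\lita$; hence $\termt_{\lita} \impl \clsc'$ and therefore $\termmeasure{\termt_{\lita}} \leq \confmeasure{\clsc'}$. Because $\setsize{\cla} \leq \graphdeg$, there are at most $\graphdeg$ free literals.

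The key step, which I expect to be the main obstacle, is to show that the individual witnesses of these terms, pooled together with the single vertex $\vertexv$, already contradict $\termt^*$. Let $\vertexsetv_{\lita}$ be a minimum witness for $\termmeasure{\termt_{\lita}}$ and put $\vertexsetu = \set{\vertexv} \union \bigcup_{\lita \text{ free}} \vertexsetv_{\lita}$. The claim to establish is $\termt^* \land \Land_{\vertexu \in \vertexsetu} \tseitinparity{\vertexu}{\tseitincharge} \impl \emptycl$. To see this, suppose an assignment $\tvastd$ satisfied $\termt^*$ and every parity constraint indexed by $\vertexsetu$. As $\vertexv \in \vertexsetu$ and $\tseitinparity{\vertexv}{\tseitincharge} \impl \cla$, the assignment $\tvastd$ must satisfy some literal $\lita \in \cla$. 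A literal whose negation lies in $\termt^*$ cannot be satisfied by $\tvastd$, so this $\lita$ must be free; but then $\tvastd$ satisfies $\termt_{\lita} = \termt^* \land \lita$ together with all of $\tseitinparity{\vertexu}{\tseitincharge}$ for $\vertexu \in \vertexsetv_{\lita} \subseteq \vertexsetu$, contradicting the defining property of the witness $\vertexsetv_{\lita}$.

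Granting the claim, $\termmeasure{\termt^*} \leq \setsize{\vertexsetu} \leq 1 + \sum_{\lita \text{ free}} \setsize{\vertexsetv_{\lita}} \leq 1 + \graphdeg \cdot \confmeasure{\clsc'}$, which is exactly $\confmeasure{\clsc} \leq \graphdeg \cdot \confmeasure{\clsc'} + 1$. The delicate point in the argument above is the bookkeeping of free literals: one has to use the case hypothesis $\termt^* \nimpl \cla$ to force the literal handed over by $\tseitinparity{\vertexv}{\tseitincharge}$ to be one of the free ones, so that its witness set is genuinely contained in $\vertexsetu$. The degenerate subcase where $\cla$ has no free literals (so $\vertexsetu = \set{\vertexv}$ and $\termt^* \land \tseitinparity{\vertexv}{\tseitincharge} \impl \emptycl$) is covered by the same reasoning and yields $\confmeasure{\clsc} \leq 1$.
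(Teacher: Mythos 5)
Your proposal is correct and follows essentially the same route as the paper's proof: fix a witness term $\termt^*$ for $\confmeasure{\clsc}$, extend it by the literals of $\cla$ to get terms implying $\clsc'$, pool their witness vertex sets together with the vertex of the parity constraint containing $\cla$, and verify that this pooled set contradicts $\termt^*$. The only differences are cosmetic refinements (splitting off the case $\termt^* \impl \cla$ and restricting to ``free'' literals to keep the extended terms non-trivial), which the paper handles implicitly by letting trivial terms have measure zero.
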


\begin{proof}
  Fix a witness $\termt^{*}$ for $\confmeasure{\clsc}$. Since
  $\confmeasure{\clsc}=\termmeasure{\termt^*}$, to prove the lemma we
  need to upper-bound the value $\termmeasure{\termt^*}$ by
  $\graphdeg \cdot \confmeasure{\clsc'}+1$.

  For any literal $\lita$ in~$\cla$, we know that $\termt^* \land
  \lita$ implies $\clsc'$ because~$\termt^*$ implies $\clsc$ and
  $\lita$ implies~$\cla$. Hence, it holds that $\confmeasure{\clsc'}
  \geq \termmeasure{\termt^* \land \lita}$, and so it will be sufficient to
  relate $\termmeasure{\termt^*}$ to the values
  $\termmeasure{\termt^* \land \lita}$.
  To this end, we look at the set of vertices $\vertexsetv^* =
  \Union_{\lita \in \cla} \vertexsetv_a \union \set{\vertexv_\cla},$
  where each $\vertexsetv_\lita$ is a witness for the corresponding
  measure~$\termmeasure{\termt^* \land \lita}$, and $\vertexv_\cla$ is
  the vertex such that $\cla \in
  \tseitinparity{\vertexv_{\cla}}{\tseitincharge}$. Note that by
  definition 
  it holds that
  $\setsize{\vertexsetv_\lita} =
  \termmeasure{\termt^* \land \lita}$ for every $\lita \in \cla$ and
  also that 
  $\setsize{\vertexsetv^*} \leq 
  1 +  
  \sum_{\lita \in \cla}
  \setsize{\vertexsetv_\lita} 
  $, which
  sum
  can in turn be bounded by
  $\graphdeg \cdot \confmeasure{\clsc'} + 1$ because $\cla$ has at most
  $\graphdeg$ literals.

  We conclude the proof by showing that $\termt^* \land
  \Land_{\vertexv \in \vertexsetv^*}
  \tseitinparity{\vertexv}{\tseitincharge} \impl \emptycl$, which
  establishes
  that $\termmeasure{\termt^*} \leq
  \setsize{\vertexsetv^*}$. The implication holds because any
  assignment either falsifies 
  the 
  clause~$\cla$, and so falsifies
  $\tseitinparity{\vertexv_\cla}{\tseitincharge}$, 
  or satisfies one of the
  literals~$\lita \in \cla$.
  But then we have as a subformula $\termt^*
  \land \Land_{\vertexv \in \vertexsetv_\lita}
  \tseitinparity{\vertexv}{\tseitincharge}$, which is unsatisfiable by
  the definition of $\vertexsetv_\lita$ when $\lita$ is true. The bound
  $\termmeasure{\termt^*} \leq \setsize{\vertexsetv^*}$ then follows,
  and so $\confmeasure{\clsc} \leq \setsize{\vertexsetv^*} \leq
  \graphdeg \cdot \confmeasure{\clsc'} + 1$.
\end{proof}

The preceding results imply that every resolution refutation of the
Tseitin formula has a configuration of intermediate complexity. This
holds because every refutation starts with a configuration of
measure~$\setsize{\vertexsetv}$ and needs to reach the configuration
of measure~$0$, 
as noted above,
while at each step the measure drops by a factor of at most $1
/ \graphdeg$ by 
the lemmas we just proved.
Let us state this formally as a corollary.

\begin{corollary}\label{cor:BoundsOnConfMeasure}
  For any resolution refutation $\proofstd$ of a Tseitin formula
  $\tseitinnot{\graphg}{\tseitincharge}$ over a
  connected graph $\graphg$ of bounded degree~$\graphdeg$
  and any positive integer $\confbound \leq
  \setsize{\vertexsetv}$ there exists a configuration $\clsc \in
  \proofstd$ such that the \confcplxm is bounded by
  $\confbound / \graphdeg \leq \confmeasure{\clsc} \leq
  \confbound$.
\end{corollary}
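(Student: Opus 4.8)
The plan is to run a discrete ``intermediate value'' argument along the refutation $\proofstd = (\clsc_0, \ldots, \clsc_\stoptime)$, tracking the integer-valued quantity $m_\timet = \confmeasure{\clsc_\timet}$. First I would record the two endpoints: since $\graphg$ is connected we have $m_0 = \confmeasure{\emptyset} = \setsize{\vertexsetv} \geq \confbound$, and since $\emptycl \in \clsc_\stoptime$ we have $m_\stoptime = 0$. Thus the measure starts no smaller than $\confbound$ and ends at $0$, so it must at some point drop to within the target window $\intclcl{\confbound/\graphdeg}{\confbound}$; the content of the argument is to show that no single step can leap over this window.

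Next I would locate the crossing explicitly. Let $t_0$ be the least index with $m_{t_0} \leq \confbound$; this is well defined because $m_\stoptime = 0 \leq \confbound$. If $t_0 = 0$, then $m_0 \leq \confbound \leq \setsize{\vertexsetv} = m_0$ forces $\confbound = \setsize{\vertexsetv}$ and $m_0 = \confbound$, and since $\graphdeg \geq 1$ the configuration $\clsc_0$ already lies in the window. Otherwise $t_0 \geq 1$, and by minimality of $t_0$ we have $m_{t_0 - 1} > \confbound \geq m_{t_0}$, so the measure strictly decreases at step $t_0$.

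The key observation is then to argue that this decreasing step must be an axiom download. For inference and erasure steps we have $\clsc_{t_0 - 1} \impl \clsc_{t_0}$, and the monotonicity lemma (if $\clsc \impl \clsc'$ then $\confmeasure{\clsc} \leq \confmeasure{\clsc'}$) gives $m_{t_0 - 1} \leq m_{t_0}$, contradicting the strict decrease. Hence step $t_0$ downloads an axiom $\cla$, so $\clsc_{t_0} = \clsc_{t_0-1} \union \set{\cla}$ and \reflem{lem:ADMeasureLB} applies to yield $\graphdeg \cdot m_{t_0} + 1 \geq m_{t_0 - 1}$. Here I would invoke integrality: $m_{t_0-1} > \confbound$ with both values integers means $m_{t_0 - 1} \geq \confbound + 1$, whence $\graphdeg \cdot m_{t_0} + 1 \geq \confbound + 1$, that is, $m_{t_0} \geq \confbound / \graphdeg$. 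Combined with $m_{t_0} \leq \confbound$, the configuration $\clsc = \clsc_{t_0}$ witnesses the corollary.

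The argument is essentially routine once the two preceding lemmas are in hand; the only place demanding care is this final integrality step, which is exactly what lets us reach the clean bound $\confbound/\graphdeg$ rather than the slightly weaker $(\confbound - 1)/\graphdeg$ that the inequality of \reflem{lem:ADMeasureLB} gives verbatim. I would also double-check the degenerate boundary case $t_0 = 0$ separately (as above) so that the use of \reflem{lem:ADMeasureLB} at a genuine download step is always justified.
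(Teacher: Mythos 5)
Your proof is correct and follows essentially the same route as the paper, which only sketches the argument in prose before stating the corollary: start at measure $\setsize{\vertexsetv}$, end at $0$, note that only axiom downloads can decrease the measure, and use \reflem{lem:ADMeasureLB} to bound the drop at the crossing step. Your write-up is in fact more careful than the paper's sketch, in particular in making explicit the integrality step that upgrades $(\confbound-1)/\graphdeg$ to $\confbound/\graphdeg$ and in checking the boundary case $t_0=0$.
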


It remains to show that a configuration having intermediate measure
must also have large space.
This part of the proof relies on the graph being an expander.

%


\begin{lemma}
  \label{lem:lemma-four-nine}
  Let $\graphg$ be an $\expparams$-edge expander graph. For every
  configuration~$\clsc$ satisfying $\confmeasure{\clsc} \leq
  \expansionsize$ it holds that~$\clspaceof{\clsc} \geq
  \expansionfactor \cdot
  \confmeasure{\clsc}$.
\end{lemma}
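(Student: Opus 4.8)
The plan is to lower-bound $\clspaceof{\clsc} = \setsize{\clsc}$ by the number of boundary edges of a witness vertex set and then invoke edge expansion. First I would fix a witness term $\termt^*$ for $\confmeasure{\clsc}$, chosen to be \emph{minimal} among terms implying $\clsc$: since $\termmeasure{\cdot}$ is monotone decreasing, shrinking any witness to a minimal implying term can only increase its term-measure, so a minimal such $\termt^*$ still satisfies $\termmeasure{\termt^*} = \confmeasure{\clsc}$. By the remark following the definition of the \confcplxm (which is \refpr{pr:AlternativeDefOfMirror} in disguise), $\olnot{\termt^*}$ is then a clause of $\negconf{\clsc}$, so \refobs{obs:WidthSpaceInMirror} gives $\setsize{\vars{\termt^*}} \leq \setsize{\clsc} = \clspaceof{\clsc}$. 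Next I would fix a \emph{minimal} witness vertex set $\vertexsetv^*$ for $\termmeasure{\termt^*}$, so that $\setsize{\vertexsetv^*} = \termmeasure{\termt^*} = \confmeasure{\clsc} \leq \expansionsize$ and $\termt^* \land \Land_{\vertexv \in \vertexsetv^*}\tseitinparity{\vertexv}{\tseitincharge} \impl \emptycl$, while every proper subset of $\vertexsetv^*$ fails to be a witness.

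The crux of the argument, and the step I expect to be the main obstacle, is the claim that \emph{every boundary edge of $\vertexsetv^*$ carries a variable mentioned by $\termt^*$}, i.e.\ $\cutedges{\vertexsetv^*} \subseteq \vars{\termt^*}$ (identifying each edge with its variable). I would prove this by contradiction using the minimality of $\vertexsetv^*$ together with a parity-flipping argument over $\GF{2}$. Suppose some edge $\edgee = (\vertexv, w)$ has $\vertexv \in \vertexsetv^*$, $w \notin \vertexsetv^*$, and $\varx_\edgee \notin \vars{\termt^*}$. By minimality of $\vertexsetv^*$ the subsystem $\termt^* \land \Land_{\vertexu \in \vertexsetv^* \setminus \set{\vertexv}}\tseitinparity{\vertexu}{\tseitincharge}$ is satisfiable; let $\tvastd$ be a total satisfying assignment. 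Since the only endpoint of $\edgee$ lying in $\vertexsetv^*$ is $\vertexv$, flipping $\varx_\edgee$ changes the parity constraint $\tseitinparity{\vertexv}{\tseitincharge}$ but leaves $\tseitinparity{\vertexu}{\tseitincharge}$ unchanged for all other $\vertexu \in \vertexsetv^*$, and it preserves $\termt^*$ because $\varx_\edgee \notin \vars{\termt^*}$. Hence one of $\tvastd$ or its flip satisfies the full system $\termt^* \land \Land_{\vertexu \in \vertexsetv^*}\tseitinparity{\vertexu}{\tseitincharge}$, contradicting that $\vertexsetv^*$ is a witness (i.e.\ that this system is contradictory). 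This establishes the claim.

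With the claim in hand the lemma follows by a short chain of inequalities. Because $\graphg$ is an $\expparams$-edge expander and $\setsize{\vertexsetv^*} = \confmeasure{\clsc} \leq \expansionsize$, \refdef{def:edge-expansion} gives $\setsize{\cutedges{\vertexsetv^*}} \geq \expansionfactor \setsize{\vertexsetv^*}$. Combining this with the containment $\cutedges{\vertexsetv^*} \subseteq \vars{\termt^*}$ and the width bound from the first paragraph yields
\begin{equation*}
  \expansionfactor \cdot \confmeasure{\clsc}
  = \expansionfactor \setsize{\vertexsetv^*}
  \leq \setsize{\cutedges{\vertexsetv^*}}
  \leq \setsize{\vars{\termt^*}}
  \leq \setsize{\clsc}
  = \clspaceof{\clsc}
  \eqcomma
\end{equation*}
which is exactly the desired bound $\clspaceof{\clsc} \geq \expansionfactor \cdot \confmeasure{\clsc}$. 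The only subtlety to double-check is that the two minimality choices are compatible (minimal implying term $\termt^*$ versus minimal vertex witness $\vertexsetv^*$), but these are independent optimizations and both can be imposed simultaneously, so no conflict arises.
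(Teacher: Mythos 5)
Your proof is correct and follows essentially the same route as the paper's: take a minimal witness term $\termt^*$ and a minimum-size witness vertex set $\vertexsetv^*$, show by the same parity-flipping argument that $\termt^*$ must mention every edge of $\cutedges{\vertexsetv^*}$, and conclude via edge expansion together with $\setsize{\termt^*} \leq \clspaceof{\clsc}$. The only cosmetic difference is that you justify this last inequality through the \negativeconf{} machinery of \refsec{sec:spacetowidth}, while the paper notes directly that a minimal implying term needs at most one literal per clause of $\clsc$.
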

\begin{proof}
  To prove the lemma, we lower-bound the size of a minimal witness
  $\termt^*$ for $\confmeasure{\clsc}$ and then use the bound
  $\clspaceof{\clsc} \geq \setsize{\termt^*}$.
  This inequality
  follows by noting that at most one literal per clause in~$\clsc$ is
  needed in the implying term~$\termt^*$.

  Fix $\termt^*$ to be a minimal witness for
  $\confmeasure{\clsc}$
  and let $\vertexsetv^*$ be a witness for
  $\termmeasure{\termt^*}$. Note that $\setsize{\vertexsetv^*} =
  \confmeasure{\clsc}$. We prove that $\termt^*$ must contain a
  variable for every edge in~$\cutedges{\vertexsetv^*}$.
  Towards contradiction, assume that
  $\termt^*$ does not
  contain some $\varx_{\edgee}$ for an edge $\edgee$ in
  $\cutedges{\vertexsetv^*}$, 
  and let $\vertexv_{\edgee}$ be a vertex
  in $\vertexsetv^*$ incident to $\edgee$. Let $\tvastd$ be
  an assignment that satisfies $\termt^* \land \Land_{\vertexv \in
    \vertexsetv^* \setminus \set{\vertexv_\edgee}}
  \tseitinparity{\vertexv}{\tseitincharge}$. Such an assignment must
  exist as otherwise $\vertexsetv^*$ would not be a witness for
  $\termmeasure{\termt^*}$. We can modify $\tvastd$ by changing
  the value of $\varx_{\edgee}$ so that
  $\tseitinparity{\vertexv_\edgee}{\tseitincharge}$ is satisfied. By
  the assumption, the new assignment $\tvastd'$ still satisfies
  $\termt^*$ and $\Land_{\vertexv \in \vertexsetv^* \setminus
    \set{\vertexv_\edgee}} \tseitinparity{\vertexv}{\tseitincharge}$
  as neither contains the variable $\varx_{\edgee}$. Thus, we have found an
  assignment satisfying $\termt^* \land \Land_{\vertexv \in
    \vertexsetv^*} \tseitinparity{\vertexv}{\tseitincharge}$, which is
  a contradiction.

  Hence, the term $\termt^*$ contains a variable for every edge in
  $\cutedges{\vertexsetv^*}$.
  Since $\graphg$ is an $\expparams$-edge expander and
  $\setsize{\vertexsetv^*} \leq \expansionsize$, the term $\termt^*$
  contains at least $\expansionfactor \cdot  \setsize{\vertexsetv^*}$
  variables. From $\clspaceof{\clsc} \geq \setsize{\termt^*}$ and the
  fact that $\setsize{\vertexsetv^*} = \confmeasure{\clsc}$ %
  \ifthenelse{\boolean{conferenceversion}}{%
    we prove that%
  }
  {%
    it follows that%
  }
  $\clspaceof{\clsc} \geq \expansionfactor \cdot  \confmeasure{\clsc}$ %
  \ifthenelse{\boolean{conferenceversion}}{%
    if
  }
  {%
    when
  }
  $\confmeasure{\clsc} \leq \expansionsize$.
\end{proof}

The preceding lemma and \refcor{cor:BoundsOnConfMeasure} together
imply \refth{th:TseitinSpaceLB}, because by
\refcor{cor:BoundsOnConfMeasure} there is a configuration with measure
between $\expansionsize / \graphdeg$ and $\expansionsize$, and this
configuration has space at least $\expansionfactor \expansionsize /
\graphdeg$ by 
\reflem{lem:lemma-four-nine}.

We want to point out that
\refth{th:TseitinSpaceLB} gives inferior results compared to a direct
application of \refth{th:spaceToWidth} to known width lower bounds.
The bounds that we get are worse by a multiplicative factor of $1 /
\graphdeg$. 
One might have hoped to
remove this multiplicative factor by improving the bound in
\reflem{lem:ADMeasureLB}, but this is not possible because 
this lemma
is tight.

To see this, suppose that the graph $\graphg$ is a $\graphdeg$-star: it consists of a center $\vertexv$ which is connected to $\graphdeg$ petals $\vertexu_1,\dots,\vertexu_d$ by the edges $\edgee_1,\ldots,\edgee_\graphdeg$, the charge of the center is $\tseitincharge(\vertexv) = 1$, and the charges of the petals are $\tseitincharge(\vertexu_1) = \dots = \tseitincharge(\vertexu_\graphdeg) = 0$.  Let $\cla \in \tseitinparity{\vertexv}{\tseitincharge}$ be the axiom $\cla = x_{\edgee_1} \lor \dots \lor \varx_{\edgee_\graphdeg}$. Taking $\clsc = \emptyset$ and $\clsc' = \set{\cla}$, we have that $\confmeasure{\clsc} = d+1$ while $\confmeasure{\clsc'} = 1$. The latter equality holds because every minimal term implying~$\cla$ is of the form~$\varx_{\edgee_i}$, a term which is contradicted by the single axiom $\olnot \varx_{\edgee_i} \in \tseitinparity{\vertexu_i}{\tseitincharge}$. Hence, we have an example where $\graphdeg \cdot \confmeasure{\clsc'} + 1 = \confmeasure{\clsc}$, which shows that \reflem{lem:ADMeasureLB} is tight.


\section{From Small Space to Small Degree in Polynomial Calculus?}
\label{sec:spacetodegree}


An intriguing question is whether an analogue of the bound in
\refth{th:spaceToWidth}
holds also for the stronger algebraic proof system
\introduceterm{polynomial  calculus} introduced in~\cite{CEI96Groebner}.
In this context, it is more relevant to discuss the variant of this
system presented in
\cite{ABRW02SpaceComplexity},
known as
\introduceterm{polynomial calculus (with) resolution}
or
\introduceterm{PCR}, which we briefly describe below.

In a PCR derivation,
configurations are sets of
polynomials in $\fieldstd[x, \olnot{x}, y,
\olnot{y}, \ldots]$, where $x$ and $\olnot{x}$ are
different formal variables.
Each polynomial~$\pcpolyp$ appearing in a configuration corresponds to
the assertion $\pcpolyp = 0$.  
The proof system contains axioms $x^2 - x$ and $x + \olnot{x} - 1$,
which restrict the values of the variables to $\{0,1\}$, and enforce
the complementarity of $x$ and $\olnot{x}$. A literal has truth value
\emph{true} if it is equal to~$0$, and truth value \emph{false} if it
is equal to~$1$. 
Each clause~$\clc$ is translated to a monomial~$\pcmonm$ with the
property that $\pcmonm = 0$ if and only if~$\clc$ is satisfied. 
For example,
the clause $x\lor y \lor \olnot{z}$
is translated to the monomial~$xy\olnot{z}$.
%
%
%
There are
two inference rules,
\introduceterm{linear combination} $\frac{p \quad q}{\alpha p + \beta
  q}$ and \introduceterm{multiplication} $\frac{p}{x p}$, where $p$
and $q$
are (previously derived) polynomials, the coefficients $\alpha, \beta$
are
elements of
$\fieldstd$, and $x$ is any variable (with or without bar).
These rules are sound in the sense that if the antecedent polynomials
evaluate to zero
under some assignment, then so does the consequent polynomial. 
%
%
A CNF formula $\fstd$ is refuted in PCR by deriving the constant
term~$1$ from the (monomials corresponding to the) clauses
in~$\fstd$.

The \introduceterm{size}, \introduceterm{degree} and
\introduceterm{monomial space} measures are
analogues of
length,  width
and clause space in resolution
(counting monomials instead of clauses).
PCR
can simulate
resolution refutations
efficiently with respect to all of these measures.

%


Let us now discuss why the method we use to prove
\refth{th:spaceToWidth}
is unlikely to generalize to PCR.
An example of formulas that seem hard to deal with in this way are
so-called
\introduceterm{pebbling contradictions},
which we 
briefly
describe next.

\begin{figure}[tp]
  \subfigure[Pyramid graph $\Pi_2$ of height 2.]          
  {
    \label{fig:pebbling-contradiction-for-Pi-2-graph}
    \begin{minipage}[b]{0.40\linewidth}
      \centering
      \includegraphics{smallPyramidHeight2.1}%
    \end{minipage}
  }
  \hfill
  \subfigure[Pebbling contradiction
    {$\pebcontr[{\pyramidgraph[2]}]{}$}.]
  {
    \label{fig:pebbling-contradiction-for-Pi-2-Peb1}
    \begin{minipage}[b]{0.55\linewidth}
      \centering
      \begin{gather*}
	\begin{aligned}
	  &
	  u
	  \\
	  \land \
	  &v
	  \\
	  \land \
	  &w
	  \\
	  \land \
	  &(\olnot{u} \lor \olnot{v} \lor x)
	  \\
	  \land \
	  &(\olnot{v} \lor \olnot{w} \lor y)
	  \\
	  \land \
	  &(\olnot{x} \lor \olnot{y} \lor z)
	  \\
	  \land \
	  &\olnot{z}
	\end{aligned}
      \end{gather*}
    \end{minipage}
  }
  \caption{Pebbling contradiction
    $\pebcontr[{\pyramidgraph[2]}]{}$
    for the pyramid graph
    $\pyramidgraph[2]$
    of height~$2$.
  }
  \label{fig:pebbling-contradiction-for-Pi-2}
\end{figure}


Pebbling contradictions are defined in terms of directed acyclic
graphs (DAGs)
$\graphg=(\vertexsetv,\edgesete)$
with bounded fan-in,
where vertices with no incoming edges are
called
\introduceterm{sources} and vertices without outgoing edges
\introduceterm{sinks}.
Assume $G$ has a
unique sink~$\sinkstd$ and associate a variable $\vertexsetv$ to each vertex
$\vertexv\in\vertexsetv$. Then the
pebbling contradiction over $\graphg$ consists of the following
clauses:
\begin{itemize}
\item
  for each source vertex~$s$, a clause~$s$ (\introduceterm{source axioms}),
\item
  for each non-source vertex $\vertexv$,
  a clause
  $\Lor_{(u,v)\in\edgesete} \olnot{u} \lor \vertexv$
  (\introduceterm{pebbling axioms}),
\item
  for the sink~$\sinkstd$, a clause~$\olnot{\sinkstd}$
  (\introduceterm{sink axiom}).
\end{itemize}
%
%
See \reffig{fig:pebbling-contradiction-for-Pi-2} for
an
illustration. 
Ben-Sasson~\cite{Ben-Sasson09SizeSpaceTradeoffs} showed that
pebbling contradictions exhibit space-width trade-offs in resolution
in that they can always be refuted in constant width as well as in
constant space but that there are graphs for which optimizing one of
these measures necessarily causes essentially worst-case linear
behaviour for the other measure.

%
%

There are two natural ways to refute pebbling contradictions in
resolution. One approach is to go ``bottom-up'' from sources to sinks in
topological order, and derive
for each vertex $\vertexv\in \vertices{\graphg}$ the unit clause~$\vertexv$
using the pebbling axiom for $v$ and the unit clauses for 
its predecessors.
When the refutation reaches $z$ it derives a
contradiction with the sink axiom $\olnot{z}$. 
See
\reffig{fig:pebbling-contradiction-for-Pi-2-bottom-up}
for an example.
This refutation can always be carried out in constant width but for
some graphs requires large space. 

The other approach is a ``top-down'' refutation
due to~\cite{Ben-Sasson09SizeSpaceTradeoffs} 
where one starts with the 
sink axiom~$\olnot{z}$ and
derives clauses of the form $\olnot{\vertexv}_{1} \lor \formuladots \lor
\olnot{\vertexv}_{\ell}$. A new clause is derived by replacing any
vertex $\vertexv_i$ in the old one by all its predecessors, i.e.,
by
resolving with the pebbling axiom for $\vertexv_i$.
Since $\graphg$ is acyclic we can repeat this process until we get to
the sources, for which the negated literals can be resolved away using
source axioms. This refutation is illustrated in
\reffig{fig:pebbling-contradiction-for-Pi-2-top-down}.
It is not hard to see that it can be performed in constant clause
space, but it might require large width.

%
%
%

\begin{figure}[t]
  \subfigure[Bottom-up refutation of {$\pebcontr[{\pyramidgraph[2]}]{}$}.]{
  \label{fig:pebbling-contradiction-for-Pi-2-bottom-up}
  \begin{minipage}[t]{0.45\linewidth}
    \begin{center}
      \begin{tabular}{rll}
        $1$.  & $u$ & Axiom \\     
        $2$.  & $v$ & Axiom \\
        $3$.  & $w$ & Axiom \\
        $4$.  & $\olnot{u} \lor \olnot{v} \lor x$ & Axiom \\
        $5$.  & $\olnot{v} \lor x$ & Res$(1,4)$ \\
        $6$.  & $x$ & Res$(2,5)$ \\
        $7$.  & $\olnot{v} \lor \olnot{w} \lor y$ & Axiom \\
        $8$.  & $\olnot{w} \lor y$ & Res$(2,7)$ \\
        $9$.  & $y$ & Res$(3,8)$ \\
        $10$. & $\olnot{x} \lor \olnot{y} \lor z$ & Axiom \\
        $11$. & $\olnot{y} \lor z$ & Res$(6,10)$ \\
        $12$. & $z$ & Res$(9,11)$\\
        $13$. & $\olnot{z}$ & Axiom \\
        $14$. & $\emptycl$ & Res$(12,13)$ \\
      \end{tabular}
    \end{center}
    \vspace{0.1mm}
  \end{minipage}
  }
  \hfill
  \subfigure[Top-down refutation of {$\pebcontr[{\pyramidgraph[2]}]{}$}.]{
  \label{fig:pebbling-contradiction-for-Pi-2-top-down}
   \begin{minipage}[t]{0.45\linewidth}
     \begin{center}
      \begin{tabular}{rll}
        $1$.  & $\olnot{z}$ & Axiom \\
        $2$.  & $\olnot{x} \lor \olnot{y} \lor z$ & Axiom \\
        $3$.  & $\olnot{x} \lor \olnot{y}$ & Res$(1,2)$ \\
        $4$.  & $\olnot{v} \lor \olnot{w} \lor y$ & Axiom \\
        $5$.  & $\olnot{v} \lor \olnot{w} \lor \olnot{x}$ & Res$(3,4)$ \\
        $6$.  & $\olnot{u} \lor \olnot{v} \lor x$ & Axiom \\
        $7$.  & $\olnot{u} \lor \olnot{v} \lor \olnot{w}$ & Res$(5,6)$ \\
        $8$.  & $w$ & Axiom \\
        $9$.  & $\olnot{u} \lor \olnot{v}$ & Res$(7,8)$ \\
        $10$. & $v$ & Axiom \\
        $11$. & $\olnot{u}$ & Res$(9,10)$ \\
        $12$. & $u$ & Axiom \\
        $13$. & $\emptycl$ & Res$(11,12)$ \\
        \vphantom{$14$.} & \mbox { }  \\  
      \end{tabular}
     \end{center}
    \vspace{0.1mm}
   \end{minipage}
  }
  \caption{Example resolution refutations of
    pebbling contradiction
    $\pebcontr[{\pyramidgraph[2]}]{}$.
  }
  \label{fig:pebbling-contradiction-for-Pi-2-refutations}
\end{figure}


A careful study now reveals that
the transformation of configurations in our
proof of \refth{th:spaceToWidth} maps either of the two refutations 
describe above
into the other one. Instead of providing a formal argument, 
we encourage the reader to compute the tranformations of the refutations in
\reftwofigs{fig:pebbling-contradiction-for-Pi-2-bottom-up}{fig:pebbling-contradiction-for-Pi-2-top-down},
observing that the axioms are downloaded in opposite order in the two
derivations. 
This observation
is the main reason why our proof does not seem to generalize to PCR,
as we now explain. 

In PCR, we can represent any conjunction
of literals $\lita_1 \land \formuladots \land \lita_{\tmwidth}$ as the
binomial $1 - \prod_i \olnot{\lita}_i$. Using this encoding with the
bottom-up approach yields a third refutation, which has constant space but
possibly large degree: the fact that a set of vertices $\vertexsetu$
``are true'' can be stored as the high-degree binomial $1 -
\prod_{\vertexv \in \vertexsetu} \olnot\vertexv$ instead of as a
collection of low-degree monomials $\setdescr{\vertexv}{\vertexv \in
  \vertexsetu}$. 
Hence, there are constant space PCR refutations of pebbling
contradictions in both the 
bottom-up and the top-down 
directions.  This in turn means that
if our proof method were to work for PCR, we would need to find
constant degree refutations in both directions. For the top-down case
it seems unlikely that such a refutation exists,
since 
clauses of the form
$\bigvee_{\vertexv \in \vertexsetu} \olnot\vertexv$ cannot be
represented as low-degree polynomials. 

%
%

\section{Concluding Remarks}
\label{sec:conclusion}

In this work, we present an alternative, completely elementary, proof
of the result by Atserias and 
\mbox{Dalmau~\cite{AD08CombinatoricalCharacterization}}
that space is an upper bound on width in resolution.
Our construction gives a syntactic way to convert a small-space
resolution refutation into a refutation in small width.
%
%
%
%
%
We also exhibit a new ``black-box'' approach for proving space
lower bounds that works by defining a progress measure 
\mbox{à la} Ben-Sasson and Wigderson~%
\cite{BW01ShortProofs}
and showing that when a refutation has made medium progress towards a
contradiction it must be using a lot of space.  We believe that these
techniques shed interesting new light on resolution space complexity
and hope that they will serve 
to increase our understanding of this
notoriously tricky complexity measure.

As an example of a question about resolution space that still remains open, 
suppose we are given a \mbox{$k$-CNF} formula that is 
guaranteed to be refutable in constant space. 
By \cite{AD08CombinatoricalCharacterization} it is also
refutable in constant width, and a simple counting argument then shows
that exhaustive search 
in small width will find a polynomial-length resolution refutation. 
But is there any way of obtaining such a short refutation from a
refutation in small space that is more explicit than doing exhaustive
search?  
%
%
And can we obtain a short refutation without blowing up the space by
more than, say, a constant factor?
Known length-space trade-off results for resolution in
\ifthenelse{\boolean{conferenceversion}}
{\cite{BBI12TimeSpace,BNT12SomeTradeoffs,BN11UnderstandingSpace,Nordstrom09SimplifiedWay}}
{\cite{BBI12TimeSpace,BN11UnderstandingSpace,BNT12SomeTradeoffs,Nordstrom09SimplifiedWay}}
do not answer this question as they do not apply to this range of
parameters. 
Unfortunately, our new proof of the space-width inequality cannot be
used to resolve this question either, since in the worst case the
resolution refutation we obtain might be as bad as the one found by
exhaustive search of small-width refutations 
(or even worse, due to repetition of clauses). 
This would seem to be
inherent---a recent result~\cite{ALN13} shows that there are formulas
refutable in space and width~$s$ where the shortest refutation has
length~$n^{\bigomega{s}}$, \ie matching the exhaustive search upper
bound up to a (small) constant factor in the exponent.

An even more intriguing question is how the space and degree measures
are related in polynomial calculus, as discussed in
\refsec{sec:spacetodegree}.
For most relations between length, space, and width in resolution, it
turns out that they carry over with little or no modification to size,
space, and degree, respectively, in polynomial calculus. So can it be
that it also 
holds that space yields upper bounds on degree in polynomial calculus?
Or could perhaps even the stronger claim hold that polynomial
calculus space is an upper bound on resolution width? These questions
remain wide open, but in the recent paper~%
\cite{FLMNV13TowardsUnderstandingPC}
we made some limited progress by showing that if a formula requires
large resolution width, then the ``XORified version'' of the formula
requires large polynomial calculus space. We refer to the introductory
section of~%
\cite{FLMNV13TowardsUnderstandingPC}
for a more detailed discussion of these issues.

\section*{Acknowledgments}

The authors wish to thank
Albert Atserias, 
Ilario Bonacina,
Nicola Galesi,
and 
Li-Yang Tan
for stimulating discussions on 
topics
related to this work.
We would also like to thank Alexander Razborov for sharing his proof
of the theorem that space upper-bounds width, which is very similar to
ours although expressed in a different language.

The research of \theauthorYF
has received funding from the
European Union's  Seventh Framework Programme (FP7/2007--2013) 
under grant agreement no.~238381.
%
Part of the work of \theauthorYF 
was performed while at the University
of Toronto and while visiting KTH Royal Institute of Technology.
The other authors were funded by the
European Research Council under the European Union's Seventh Framework
Programme \mbox{(FP7/2007--2013) /} ERC grant agreement no.~279611.
\TheauthorJN 
was also supported by
Swedish Research Council grants 
\mbox{621-2010-4797}
and
\mbox{621-2012-5645}.

%
%

\bibliography{refArticlesUTF8,refBooksUTF8,refOtherUTF8}

\bibliographystyle{alpha}   

\end{document}